\begin{document}
\begin{frontmatter}
\title{Finding~the~Leftmost~Critical~Factorization~on Unordered~Alphabet}
\author{Dmitry Kosolobov}
\ead{dkosolobov@mail.ru}
\address{Ural Federal University, Ekaterinburg, Russia}
\begin{abstract}
We present a linear time and space algorithm computing the leftmost critical factorization of a given string on an unordered alphabet.
\end{abstract}
\begin{keyword}
critical factorization \sep critical points \sep leftmost critical point \sep unordered alphabet \sep Crochemore--Perrin algorithm
\end{keyword}
\end{frontmatter}

\theoremstyle{plain}
\newtheorem{theorem}{Theorem}
\newtheorem{lemma}{Lemma}
\newtheorem*{example}{Example}

\algtext*{EndIf}
\algtext*{EndWhile}
\algtext*{EndFor}

\section{Introduction}

Stringology and combinatorics on words are closely related fields that intensively interact with each other. One of the most famous examples of their interaction is the surprising application of the so-called \emph{critical factorization}, a notion that was created inside the field of combinatorics on words for purely theoretic reasons (the precise definition is presented below). Critical factorizations are at the core of the constant space string matching algorithm by Crochemore and Perrin~\cite{CrochemorePerrin} and its real time variation by Breslauer, Grossi, and Mignosi~\cite{BreslauerGrossiMignosi}, which are, perhaps, the most elegant and simple string matching algorithms with such time and space bounds.

It is known that a critical factorization can be found in linear time and constant space when the input string is drawn from an ordered alphabet, i.e., when the alphabet is totally ordered and we can use symbol comparisons that test for the relative order of symbols (see~\cite{CrochemorePerrin,Duval}). In~\cite{BreslauerGrossiMignosi} it was posed as an open problem whether it is possible to find in linear time a critical factorization of a given string over an arbitrary unordered alphabet, i.e., when our algorithm is allowed to perform only equality comparisons. In this paper we answer this question affirmatively; namely, we describe a linear time algorithm finding the leftmost critical factorization of a given string on an unordered alphabet. A similar result is known for unbordered conjugates, a concept related to the critical factorizations: Duval et al.~\cite{DuvalLecroqLefebvre} proposed a linear algorithm that allows to find an unbordered conjugate of a given string on an arbitrary unordered alphabet. It is worth noting that all known so far algorithms working on general alphabets could find only \emph{some} critical factorization while our algorithm always finds the leftmost one. However, for the case of integer alphabet, there is a linear algorithm finding the leftmost critical factorization~\cite{DuvalEtAl} but it uses some structures (namely, the Lempel--Ziv decomposition) that cannot be computed in linear time on a general (even ordered) alphabet~\cite{Kosolobov}.

The paper is organized as follows. Section~\ref{SectPrel} contains some basic definitions and facts used throughout the text. In Section~\ref{SectAlg} we present our first algorithm and prove that its running time is $O(n\log n)$\footnote{For brevity, $\log$ denotes the logarithm with the base~$2$.} in Section~\ref{SectAnalysis}, where $n$ is the length of the input string. A more detailed analysis of this algorithm is given in Section~\ref{SectProblems}. In Section~\ref{SectLinearAlg} we improve our first solution to obtain a linear algorithm. Finally, we conclude with some remarks in Section~\ref{SectConclusion}.

\section{Preliminaries}\label{SectPrel}

We need the following basic definitions. A \emph{string $w$} over an alphabet $\Sigma$ is a map $\{1,2,\ldots,n\} \mapsto \Sigma$, where $n$ is referred to as the \emph{length of $w$}, denoted by $|w|$. We write $w[i]$ for the $i$th letter of $w$ and $w[i..j]$ for $w[i]w[i{+}1]\cdots w[j]$. Let $w[i..j]$ be the empty string for any~$i > j$. A string $u$ is a \emph{substring} (or a \emph{factor}) of $w$ if $u=w[i..j]$ for some $i$ and $j$. The pair $(i,j)$ is not necessarily unique; we say that $i$ specifies an \emph{occurrence} of $u$ in $w$. A string can have many occurrences in another string. A substring $w[1..j]$ [respectively, $w[i..n]$] is a \emph{prefix} [respectively, \emph{suffix}] of $w$. For integers $i$ and $j$, the set $\{k\in \mathbb{Z} \colon i \le k \le j\}$ (possibly empty) is denoted by $[i..j]$. Denote $[i..j) = [i..j{-}1]$, $(i..j] = [i{+}1..j]$, and $(i..j) = [i{+}1..j{-}1]$. Our notation for arrays is similar to that for strings: for example, $a[i..j]$ denotes an array indexed by the numbers $i, i{+}1, \ldots, j$.

Throughout the paper, we intensively use different periodic properties of strings. A string $u$ is called a \emph{border} of a string $w$ if $u$ is both a prefix and a suffix of $w$. A string is \emph{unbordered} if it has only trivial borders: the empty string and the string itself. An integer $p$ is a \emph{period} of $w$ if $0 < p \le |w|$ and $w[i] = w[i{+}p]$ for all $i=1,2,\ldots,|w|{-}p$. It is well known that $p > 0$ is a period of $w$ iff $w$ has a border of the length $|w| - p$. A string of the form $xx$, where $x$ is a nonempty string, is called a \emph{square}. Let $w[i..j] = xx$ for some $i$, $j$ and a nonempty string $x$; the position $i + |x|$ is called the \emph{center} of the square $w[i..j]$. A string $w$ is \emph{primitive} if $w \ne x^k$ for any string $x$ and any integer $k > 1$. A string $v$ is a \emph{conjugate} of a string $w$ if $v = w[i..|w|]w[1..i{-}1]$ for some $i$.
\begin{lemma}[see \cite{Lothaire}]
A string $w$ is primitive iff $w$ has an unbordered conjugate.\label{PrimitiveCrit}
\end{lemma}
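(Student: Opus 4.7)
The plan is to prove the two implications separately. The $(\Leftarrow)$ direction follows by contrapositive: if $w$ is non-primitive, write $w = x^k$ with $k \ge 2$ and $x$ nonempty, so that $w$ has period $|x|$. Every conjugate $v = w[i..n]w[1..i{-}1]$ inherits this period, and since $|x|$ divides $|w|$, the conjugate $v$ itself has the form $y^k$, where $y$ is the corresponding rotation of $x$. Then $y^{k-1}$ is both a proper prefix and a proper suffix of $v$, so $v$ is bordered, and no conjugate of $w$ can be unbordered.

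For the $(\Rightarrow)$ direction, assume $w$ is primitive. I would fix any total order $\preceq$ on the (finite) set of letters appearing in $w$ and let $v$ be the lexicographically smallest among the conjugates of $w$, then claim that this $v$ is unbordered. Suppose, toward a contradiction, that $v$ has a nontrivial border $b$, so $v = bs = tb$ with $0 < |b| < |v|$. The rotation $v' = sb$ is another conjugate, and the minimality of $v$ gives $bs \preceq sb$. I would then split into two cases depending on whether $|b| \le |v|/2$. If the two occurrences of $b$ in $v$ do not overlap, then $v$ has the form $bub$ for some $u$, and comparing suitable rotations together with the minimality of $v$ eventually forces an equality of the form $bu = ub$; a standard commutation lemma then implies that $b$ and $u$ are both powers of a common root, making $v$ (and hence $w$) a proper power, contradicting primitivity. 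If the two copies of $b$ overlap, then $b$ directly exhibits $v$ as having period $|v| - |b| < |b|$, and a Fine--Wilf-style periodicity argument (or a direct $\gcd$-type computation using that every period of $v$ gives rise to a period of $w$) again forces $w$ to be a proper power.

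The main obstacle is the forward direction, especially handling the case analysis cleanly. The overlap case requires care because $b$ must match itself in a shifted position within $v$, which encodes a short nontrivial period that has to be propagated back to $w$. In the non-overlap case one has to chain together several minimality inequalities among rotations of $v$ before the commutation lemma can be invoked. The remaining bookkeeping with borders, periods, and rotations is routine.
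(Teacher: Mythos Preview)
The paper does not give its own proof of this lemma; it is stated with a reference to Lothaire and used as a black box. So there is nothing in the paper to compare against directly. Your overall strategy---show that the lexicographically smallest conjugate of a primitive word is unbordered---is exactly the classical Lothaire argument, and your $(\Leftarrow)$ direction is clean and correct.

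For the $(\Rightarrow)$ direction, your non-overlap case is essentially right, though you should name the ``suitable rotations'': with $v=bub$ lex-minimal, comparing $v$ to the rotation $bbu$ gives $ub\le bu$, and comparing $v$ to the rotation $ubb$ (whose prefix of length $|bu|$ is $ub$) then forces $ub=bu$; the commutation lemma finishes as you say. The genuine gap is in your overlap case. Knowing only that $v$ has period $|v|-|b|$ is \emph{not} enough for a Fine--Wilf argument (that lemma needs two periods), and your parenthetical claim that ``every period of $v$ gives rise to a period of $w$'' is false for conjugates in general (e.g.\ $aba$ has period~$2$ but its conjugate $aab$ does not). The easy fix is to avoid the case entirely: if $|b|>|v|/2$ then the overlap of the two copies of $b$ produces a strictly shorter nonempty border of $v$, so the \emph{shortest} nonempty border already satisfies $|b|\le |v|/2$, and you are back in the non-overlap case.
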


\begin{figure}[htb]
\center
\includegraphics[scale=0.45]{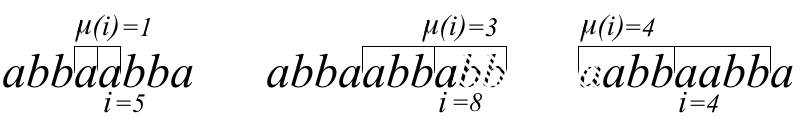}
\caption{Internal, right external, and left external local periods of the string $abbaabba$.}
\label{fig:locperiod}
\end{figure}
Now we can introduce the main notion of this paper. The \emph{local period} at a position $i$ (or centered at a position $i$) of $w$ is the minimal positive integer $\mu(i)$ such that the substring $w[\max\{1, i{-}\mu(i)\}..\min\{|w|, i{+}\mu(i){-}1\}]$ has the period $\mu(i)$ (see Figure~\ref{fig:locperiod}). Informally, the local period at a given position is the size of the smallest square centered at this position. We say that the local period $\mu(i)$ is \emph{left external} [respectively, \emph{right external}] if $i - \mu(i) < 1$ [respectively, $i + \mu(i) - 1 > |w|$]; the local period is \emph{external} if it is either left external or right external. The local period is \emph{internal} if it is not external. Obviously, the local period at any position of $w$ is less than or equal to the minimal period of $w$. A position $i$ of $w$ with the local period that is equal to the minimal period of $w$ is called a \emph{critical point}; the corresponding factorization $w[1..i{-}1]\cdot w[i..|w|]$ is called a \emph{critical factorization}. The following remarkable theorem holds.
\begin{theorem}[see~\cite{CesariVincent,Lothaire}]
Let $w$ be a string with the minimal period $p > 1$. Any sequence of $p{-}1$ consecutive positions of $w$ contains a critical point.\label{CritFact}
\end{theorem}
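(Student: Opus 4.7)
My plan is to proceed by contradiction. Suppose some window $W = [a..a+p-2]$ of $p-1$ consecutive positions of $w$ contains no critical point. Then for every $i \in W$ the local period satisfies $\mu(i) < p$ (since we already know $\mu(i) \le p$ is automatic). Each such $\mu(i)$ witnesses a square centered at $i$: namely, the factor $w[\max\{1,i-\mu(i)\}..\min\{|w|,i+\mu(i)-1\}]$ has period $\mu(i) < p$. My goal is to chain these overlapping short-period factors together, obtain a period $q < p$ of the whole string $w$, and contradict the minimality of $p$.

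The technical engine is Fine and Wilf's theorem: if a string has two periods $q_1, q_2$ and is at least as long as $q_1 + q_2 - \gcd(q_1,q_2)$, then $\gcd(q_1,q_2)$ is also a period. I would apply this inductively to pairs of consecutive positions $i, i+1$ in $W$. The square centered at $i$ covers the interval $[i-\mu(i)..i+\mu(i)-1]$ and the square at $i+1$ covers $[i+1-\mu(i+1)..i+\mu(i+1)]$; since both local periods are less than $p$ and the centers differ by $1$, the overlap of these two intervals contains the length-$(\mu(i)+\mu(i+1)-1)$ range around the joint center, which is exactly the Fine--Wilf threshold. Hence the intersection inherits the period $\gcd(\mu(i), \mu(i+1)) < p$, and by iterating across the whole window I obtain a single period $q < p$ of a factor that spans a range of at least $p-1$ positions plus the surrounding square-extensions.

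To finish, I would propagate this period $q < p$ outward: once a long enough factor has period $q$, any further square of period $\mu(j) < p$ sharing a large overlap with it again satisfies the Fine--Wilf hypothesis and forces the overlap (and therefore the union) to have period $\gcd(q,\mu(j)) \le q$. Extending in both directions I cover the entirety of $w$, producing a period strictly below $p$, which contradicts the assumption that $p$ is the minimal period. The positions of $W$ near the boundary of $w$ whose local periods are external do not cause trouble because, externally or not, $\mu(i) < p$ still certifies that the corresponding square present in $w$ has period $\mu(i)$, which is all the Fine--Wilf chaining requires.

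The main obstacle is the bookkeeping in the propagation step: at each application of Fine and Wilf I must verify that the overlap is long enough, and I must make sure the period I finally obtain extends to the whole of $w$ rather than merely to a long factor. The cleanest way to organize this, which I would follow, is to pick the position $i^\ast \in W$ minimizing $\mu(i^\ast)$, use it as a seed, and argue inductively on the distance from $i^\ast$ that the period $\gcd$-combined so far is a period of an ever-growing factor, always smaller than $p$ and covering at some step a range of length $\ge p$, at which point primitivity-like considerations force the period to be global.
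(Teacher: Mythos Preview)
The paper does not give its own proof of this theorem; it is stated with a reference to Cesari--Vincent and Lothaire and is used as a black box throughout. So there is nothing in the paper to compare your argument to.

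As for your attempt itself, the Fine--Wilf chaining step does not go through as written. You assert that for consecutive centers $i$ and $i{+}1$ the two local squares overlap on an interval of length $\mu(i)+\mu(i{+}1)-1$. That is false. If, say, $\mu(i) < \mu(i{+}1)$ and both local periods are internal, the intersection of $[i-\mu(i)..i+\mu(i)-1]$ and $[i+1-\mu(i{+}1)..i+\mu(i{+}1)]$ is exactly the smaller square $[i-\mu(i)..i+\mu(i)-1]$, of length $2\mu(i)$; when $\mu(i{+}1)$ is much larger than $\mu(i)$ this is far below the Fine--Wilf threshold $\mu(i)+\mu(i{+}1)-\gcd(\mu(i),\mu(i{+}1))$, so you cannot conclude that the overlap has period $\gcd(\mu(i),\mu(i{+}1))$. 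Even in the cases where Fine--Wilf does apply, it only yields a period of the \emph{overlap}, not of the union, and you give no mechanism to extend a short period from the overlap to the union and then outward to all of $w$. The ``propagation'' paragraph asserts exactly the missing step without justification. External local periods make the situation strictly worse, not better, since the covered interval is then shorter than $2\mu(i)$.

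The known proofs (the maximal-suffix argument in Lothaire, or the original Cesari--Vincent/Duval approach) do not proceed by chaining Fine--Wilf across consecutive positions; they isolate a single distinguished position in the window (e.g., via two opposite orderings, or the position of maximal local period) and argue directly that its local period must equal $p$. If you want a self-contained proof, that is the route to take.
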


Theorem~\ref{CritFact} implies that any string with the minimal period $p$ has a critical point among the positions $1,2,\ldots, p$. Clearly, the local period corresponding to any such critical point is left external. The following lemmas are straightforward.
\begin{lemma}
If the local period at a position of a given string is both left external and right external, then this position is a critical point.\label{LeftRightExt}
\end{lemma}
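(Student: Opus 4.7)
The plan is to simply unfold the definitions. Let $w$ be the string, write $n = |w|$, let $i$ be the position in question, and set $\mu = \mu(i)$. By definition of the local period, the substring
\[
u \;=\; w[\max\{1, i-\mu\} \,..\, \min\{n, i+\mu-1\}]
\]
has period $\mu$. The hypothesis that $\mu$ is left external means $i - \mu < 1$, so $\max\{1, i-\mu\} = 1$; the hypothesis that $\mu$ is right external means $i + \mu - 1 > n$, so $\min\{n, i+\mu-1\} = n$. Hence $u = w$, and $\mu$ is a period of $w$ itself.

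Let $p$ denote the minimal period of $w$. From the previous paragraph $p \le \mu$. On the other hand, the excerpt already notes that the local period at any position is at most the minimal period, so $\mu \le p$. Therefore $\mu = p$, which is exactly what it means for $i$ to be a critical point.

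There is no real obstacle here: the statement is an immediate consequence of the fact that, when both $\max$ and $\min$ in the definition of the local period are resolved by the boundary cases, the window shrinks to the full string $w$. The only thing worth double-checking while writing the proof is that the inequality $\mu(i) \le p$ used above is indeed the one stated just before the lemma in the excerpt, so that no external fact is invoked.
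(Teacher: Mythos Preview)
Your proof is correct and is exactly the kind of direct unfolding of definitions the paper has in mind: the paper does not actually write out a proof but simply declares the lemma ``straightforward,'' and your argument is the intended one-line verification.
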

\begin{lemma}
If the local period $\mu(i)$ at a position $i$ of a given string $w$ is not right external [respectively, left external], then the string $w[i..i{+}\mu(i){-}1]$ [respectively, $w[i{-}\mu(i)..i{-}1]$] is unbordered.\label{Unbordered}
\end{lemma}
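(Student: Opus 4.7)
The plan is to prove both parts by contradiction. I will assume the named substring admits a nontrivial border and exhibit a local period at $i$ strictly smaller than $\mu(i)$, contradicting the minimality in the definition of $\mu(i)$. I will focus on the first claim (the ``not right external'' case); the second will follow by a symmetric argument applied to $v = w[i-\mu(i)..i-1]$.

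Concretely, I will suppose that $u := w[i..i+\mu(i)-1]$ has a border of length $b$ with $0 < b < \mu(i)$, which is equivalent to saying that $u$ has period $p := \mu(i) - b$. My target will be to show that $w[\max\{1, i - b\}..i + b - 1]$ has period $b$, which exhibits a local period at $i$ of value at most $b < \mu(i)$. Note that this substring is well-defined: since $b < \mu(i) \le |w| - i + 1$, we have $i + b - 1 < i + \mu(i) - 1 \le |w|$.

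The heart of the argument will be a two-step chain, establishing $w[k] = w[k+b]$ for every $k \in [\max\{1, i-b\}, i-1]$. First, the given local period $\mu(i)$ will yield $w[k] = w[k + \mu(i)]$, which is legitimate because $k \ge \max\{1, i - \mu(i)\}$ (using $b \le \mu(i)$). Second, the period $p$ of $u$ will give $w[k + b] = w[k + b + p] = w[k + \mu(i)]$, because $k + b \in [i, i + b - 1]$, which is precisely the initial range of $|u| - p = b$ positions in $u$ where the period-$p$ relation applies. Chaining the two equalities produces the required $w[k] = w[k + b]$.

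The main obstacle I anticipate is nothing deep, only index bookkeeping: I need to verify carefully that $k + b$ really lies in the portion of $u$ where the period $p$ is guaranteed, and that $k$ lies in the portion where the original local period relation applies. Both checks reduce to the inequality $b \le \mu(i)$. The symmetric statement for $v = w[i - \mu(i)..i - 1]$ will be obtained by the mirror argument, in which the local period relation takes the form $w[k'] = w[k' - \mu(i)]$ on positions $k' \in [i, i + b - 1]$ and the period $p$ of $v$ contributes $w[k'-\mu(i)] = w[k' - b]$.
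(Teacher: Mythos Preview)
Your argument is correct. The paper itself does not supply a proof of this lemma (it merely declares it ``straightforward''), and what you wrote is exactly the kind of direct verification the authors had in mind: assuming a nontrivial border of length $b$ in $w[i..i{+}\mu(i){-}1]$, you correctly chain the period $\mu(i)$ of $w[\max\{1,i{-}\mu(i)\}..i{+}\mu(i){-}1]$ with the period $\mu(i)-b$ of $w[i..i{+}\mu(i){-}1]$ to exhibit period $b$ on $w[\max\{1,i{-}b\}..i{+}b{-}1]$, contradicting the minimality of $\mu(i)$. The index checks you flagged (that $k\ge\max\{1,i{-}\mu(i)\}$ and $k{+}b\in[i..i{+}b{-}1]$) are exactly right, and the mirror argument for the ``not left external'' case goes through in the same way.
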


\section{$O(n\log n)$ Algorithm}\label{SectAlg}

Our construction is based on the following observation.
\begin{lemma}
Let $w$ be a string with the minimal period $p > 1$. Denote $k = \max\{l \colon w[1..l] = w[j..j{+}l{-}1]\text{ for some }j\in (1..p]\}$. The leftmost critical point of $w$ is the leftmost position $i > k + 1$ with external local period.\label{LeftExtCrit}
\end{lemma}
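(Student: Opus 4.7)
The plan is to split the statement into two complementary claims about the leftmost critical point $i^*$ of $w$: (a) $i^* > k+1$, and (b) every position $i$ with $k+1 < i < i^*$ has an internal local period. By Theorem~\ref{CritFact} one has $i^* \le p$, so $\mu(i^*) = p$ is left external and in particular external; together with (a) and (b) this identifies $i^*$ as the leftmost position $> k+1$ with external local period.

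Both parts will rely on the same observation: any pair $(j, l)$ with $j \in (1..p]$ and $w[1..l] = w[j..j+l-1]$ makes $w[1..l+j-1]$ have period $j-1 < p$, so for any position $i' \le l+1$ the substring $w[\max(1, i' - (j-1))..\min(|w|, i' + j - 2)]$ lies inside $w[1..l+j-1]$ and inherits period $j-1$. For part~(a) I will assume $k \ge i^* - 1$, take a witness $(j, l)$ with $l = k$, and apply this observation at $i' = i^*$ (which satisfies $i^* \le l + 1$) to conclude $\mu(i^*) \le j-1 < p$, contradicting $\mu(i^*) = p$.

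For part~(b), fix $i$ with $k+1 < i < i^*$; since $i$ is not critical, $\mu(i) < p$. Suppose $\mu(i)$ is external. Lemma~\ref{LeftRightExt} rules out it being both left and right external. If $\mu(i)$ is left external only, then $w[1..i+\mu(i)-1]$ lies inside $w$ and has period $\mu(i)$, giving $w[1..i-1] = w[\mu(i)+1..i+\mu(i)-1]$ with $\mu(i)+1 \in (1..p]$; by the definition of $k$ this forces $k \ge i-1$, contradicting $i > k+1$.

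The remaining subcase, where $\mu(i)$ is right external only, will be the main obstacle because it cannot be reduced to the definition of $k$ directly. I plan to handle it by comparing the local-period substrings at $i$ and at $i^*$: the inequalities $i^* > i > \mu(i)$ and $i^* + \mu(i) - 1 > i + \mu(i) - 1 > |w|$ imply that the local-period substring at $i^*$ of radius $\mu(i)$ equals $s := w[i^* - \mu(i)..|w|]$, which is a suffix of $w[i-\mu(i)..|w|]$ and therefore inherits its period $\mu(i)$. But the minimality $\mu(i^*) = p > \mu(i)$ asserts precisely that $s$ must \emph{not} have period $\mu(i)$, a contradiction.
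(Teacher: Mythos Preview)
Your proof is correct and follows essentially the same decomposition as the paper: positions $\le k+1$ are shown non-critical via the witness $j$, and positions in $(k+1..|w|]$ with external but non-critical local period are ruled out by a left/right case split, the left case reducing to the definition of $k$ and the right case to a period-inheritance argument.

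The one noteworthy difference is your treatment of the right-external subcase. The paper argues more globally: if $\mu(i)$ is right external and $<p$, then \emph{every} $i'>i$ satisfies $\mu(i')\le\mu(i)$ and $i'-\mu(i')\ge 1$, so no position to the right of $i$ can have a left-external local period; since Theorem~\ref{CritFact} guarantees a critical point with left-external local period in $(k+1..p]$, that critical point must lie at or before $i$, hence $i$ is not leftmost. You instead apply the same period-inheritance idea only at the single position $i^*$, obtaining $\mu(i^*)\le\mu(i)<p$ directly and contradicting $\mu(i^*)=p$. Your route is shorter and avoids the auxiliary appeal to Theorem~\ref{CritFact}; the paper's route yields the slightly stronger intermediate fact that all local periods to the right of such an $i$ are bounded by $\mu(i)$ and not left external, which is not needed here but has the flavor of observations used later in the paper.
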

\begin{proof}
Denote by $j$ a position such that $j \in (1..p]$ and $w[1..k] = w[j..j{+}k{-}1]$. Obviously, each of the positions $1,2,\ldots, k{+}1$ has the local period that is at most $j{-}1 < p$ (see Figure~\ref{fig:krange}) and hence cannot be a critical point.
\begin{figure}[htb]
\center
\includegraphics[scale=0.45]{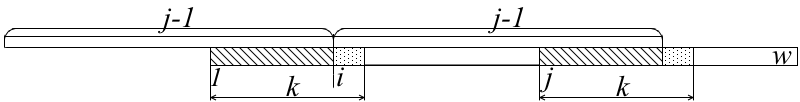}
\caption{The local period at a position $i \in [1..k{+}1]$.}
\label{fig:krange}
\end{figure}

Consider a position $i$ with left external local period $\mu(i) < p$. By Lemma~\ref{LeftRightExt}, $\mu(i)$ is not right external. So, we have $w[1..i{-}1] = w[\mu(i){+}1..i{+}\mu(i){-}1]$. Since $\mu(i) + 1 \le p$, by the definition of $k$, we have $i - 1 \le k$. Hence, any position $i > k + 1$ with left external local period is a critical point.

Now consider a position $i$ with right external local period $\mu(i) < p$. By Lemma~\ref{LeftRightExt}, $\mu(i)$ is not left external. It is easy to see that for any $i' \in (i..|w|]$, we have $\mu(i') \le \mu(i)$ and $i' - \mu(i') \ge 1$. Since Theorem~\ref{CritFact} implies that $w$ must have a critical point with left external local period, the position $i$ cannot be the leftmost position in $(k{+}1..|w|]$ with external local period.
\end{proof}

Hereafter, $w$ denotes the input string of length $n$ with the minimal period $p$. We process the trivial case $p = 1$ separately, so, assume $p > 1$. According to Theorem~\ref{CritFact} and Lemma~\ref{LeftExtCrit}, our algorithm processes only the first $p$ positions of $w$ from left to right starting from the position $k + 2$, where $k$ is defined as in Lemma~\ref{LeftExtCrit}, and when a local period at a given position $i$ is computed, then the following positions are skipped while they have at most the same local period. This leads to an $O(n \log n)$ time algorithm. To get a linear time algorithm, some local periods are reported from previous positions due to some local properties that are discussed in details in Section~\ref{SectLinearAlg}. More precisely, our $O(n\log n)$ algorithm is as follows.
\begin{algorithm}
\begin{algorithmic}[1]
\State compute $k = \max\{l \colon w[1..l] = w[j..j{+}l{-}1]\text{ for some }j\in (1..p]\}$\label{lst:k}
\State $i \gets k + 2;$
\While{$\mathbf{true}$}
    \State compute $\mu(i);$\label{lst:locper}
    \If{$\mu(i)$ is external}
        \State $i$ is the leftmost critical point; stop the algorithm;
    \EndIf
    \State $\mu \gets \mu(i);$\label{lst:skip0}
    \While{$w[i{-}1] = w[i{+}\mu{-}1]$}\label{lst:skip1}\Comment{skip positions that have local period at most $\mu$}
        \State $i \gets i + 1;$\label{lst:skip2}
    \EndWhile
\EndWhile
\caption{~}
\end{algorithmic}
\end{algorithm}

Obviously, the positions that the algorithm skips in lines~\ref{lst:skip1}--\ref{lst:skip2} have the local period at most $\mu < p$ and therefore cannot be critical points. So, Lemma~\ref{LeftExtCrit} immediately implies the correctness of Algorithm~1.

To calculate the number $k$ in $O(n)$ time, we utilize the following fact.
\begin{lemma}[{see~\cite[Chapter 1.5]{Gusfield}}]
For any strings $u$ and $w$, one can compute in $O(|u|)$ time an array $b[1..|u|]$ such that $b[j] = \max\{l \colon u[j..j{+}l{-}1] = w[1..l]\}$ for $j \in [1..|u|]$.\label{PartMatch}
\end{lemma}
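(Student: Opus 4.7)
The plan is to reduce the claim to the classical Z-function (or equivalently KMP failure-function) computation described in Gusfield's chapter~1.5. First I would observe that $b[j] \le |u| - j + 1 \le |u|$ for every $j \in [1..|u|]$, so only the first $\min(|w|, |u|)$ symbols of $w$ are ever consulted; replacing $w$ by its prefix of that length, I may assume $|w| \le |u|$, which brings the total input size down to $O(|u|)$.

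Next I would form the concatenated string $s = w\mathbin{\#}u$, where $\#$ is a fresh symbol not appearing in $w$ or $u$. In the pure equality model over an arbitrary alphabet such a symbol can be simulated by attaching a one-bit flag to every character (with the flag set only for the separator) and performing one extra equality test per comparison, so this does not leave the model. I would then run the standard Z-algorithm on $s$: in time linear in $|s|$ it produces, for every position $i$, the length $Z[i]$ of the longest substring of $s$ starting at $i$ that coincides with a prefix of $s$. The Z-algorithm uses only character equality, which matches our hypotheses.

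Finally I would set $b[j] = Z[|w|+1+j]$ for every $j \in [1..|u|]$. The separator $\#$ guarantees $Z[|w|+1+j] \le |w|$, so by the definition of the Z-values this quantity is exactly the length of the longest prefix of $w$ that matches $u$ starting at position $j$, i.e.\ the value $b[j]$ sought. Because $|s| = |w| + |u| + 1 = O(|u|)$ after the truncation, the overall running time is $O(|u|)$.

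The only points that require any genuine care are (i) justifying the truncation step, which is immediate from the pointwise bound $b[j]\le |u|-j+1$, and (ii) realizing the separator within the equality-only model as described above; neither amounts to a real obstacle, and everything else is a direct invocation of the textbook Z-algorithm.
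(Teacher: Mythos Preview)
Your argument is correct. Note, however, that the paper does not supply its own proof of this lemma at all: it is stated with a bare citation to Gusfield's Chapter~1.5 and used as a black box. That chapter is precisely the Z-algorithm, so your reduction---truncate $w$ to length at most $|u|$, concatenate with a separator, run the Z-algorithm, and read off the answers---is exactly the intended route. Your handling of the two technical points (the truncation bound $b[j]\le|u|-j+1$ and the simulation of a fresh separator in the equality-only model via a tag bit) is sound and makes the citation self-contained.
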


To complete our construction, we describe an algorithm calculating the local period $\mu(i)$ at a given position $i$ provided $\mu(i)$ is internal. If this algorithm fails to compute $\mu(i)$, we decide that the local period is external.
\begin{lemma}
One can compute the internal local period $\mu(i)$ at a given position $i$ in $O(\mu(i))$ time and space.\label{LocPerCompute}
\end{lemma}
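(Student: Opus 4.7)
The plan is to combine exponential doubling with Lemma~\ref{PartMatch}. Both tools use only equality comparisons, so this is compatible with an unordered alphabet.

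I would iterate over candidate window sizes $m = 1, 2, 4, 8, \ldots$, stopping as soon as either a suitable $\mu$ is found or $m$ exceeds $M := \min(i-1, |w|-i+1)$ (in the latter case, $\mu(i)$ must be external and the procedure reports failure). For a given $m \le M$, I would take the two length-$m$ blocks $L = w[i{-}m..i{-}1]$ and $R = w[i..i{+}m{-}1]$ and apply Lemma~\ref{PartMatch} with text $L$ and pattern $R$ to compute the array $b[1..m]$ in $O(m)$ time. The internal local-period condition $w[i{-}\mu..i{-}1] = w[i..i{+}\mu{-}1]$ is equivalent to $b[m{-}\mu{+}1] = \mu$ (i.e., the length-$\mu$ suffix of $L$ matches the length-$\mu$ prefix of $R$), so I would scan $\mu = 1, 2, \ldots, m$ and return the smallest $\mu$ satisfying this equation; otherwise I double $m$ and repeat.

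Correctness follows directly from the definition of $\mu(i)$: as soon as $m \ge \mu(i)$, the scan returns exactly $\mu(i)$, since by definition this is the smallest $\mu$ satisfying the internal square condition and nothing smaller can appear. For the running time and space, each iteration costs $O(m)$ (dominated by the call to Lemma~\ref{PartMatch}), and doubling guarantees that the successful iteration has $m < 2\mu(i)$; hence the total cost is bounded by the geometric sum $1 + 2 + 4 + \cdots + 2\mu(i) = O(\mu(i))$. Data from earlier iterations can be discarded between doublings, so the space is also $O(\mu(i))$.

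No serious obstacle arises; the only conceptual point to verify is that Lemma~\ref{PartMatch} uses only equality comparisons, which is immediate from its standard realizations (the $Z$-array or the KMP failure function). Intuitively the procedure is just the naive search for the smallest square centered at $i$, sped up to linear work per candidate window via Lemma~\ref{PartMatch} and to overall $O(\mu(i))$ work via doubling.
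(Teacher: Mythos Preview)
Your proposal is correct and follows essentially the same approach as the paper: both use exponential doubling of a window size together with Lemma~\ref{PartMatch} to detect the smallest $\mu$ with $w[i{-}\mu..i{-}1]=w[i..i{+}\mu{-}1]$, yielding a geometric sum that is $O(\mu(i))$. The only cosmetic differences are indexing (the paper phrases the test as ``rightmost $j$ with $b[j]\ge i-j$'' instead of your equivalent ``smallest $\mu$ with $b[m{-}\mu{+}1]=\mu$'') and that you explicitly cap $m$ at $\min(i{-}1,|w|{-}i{+}1)$, which the paper leaves implicit.
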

\begin{proof}
Fix an integer $x < i$. Let us first describe an algorithm that finds $\mu(i)$ in $O(x)$ time and space provided $\mu(i) \le x$. Using Lemma~\ref{PartMatch}, our algorithm constructs in $O(x)$ time an array $b[i{-}x..i{-}1]$ (for clarity, the indices start with $i{-}x$) of the length $x$ such that $b[j] = \max\{l \colon l \le x\text{ and }w[j..j{+}l{-}1] = w[i..i{+}l{-}1]\}$ for $j \in [i{-}x..i)$. It is straightforward that $\mu(i) = i - j$ for the rightmost $j \in [i{-}x..i)$ such that $b[j] \ge i - j$.

Now, to compute $\mu(i)$, we consecutively execute the above algorithm for $x = 2^0, 2^1, 2^2, \ldots, 2^{\lfloor\log(i-1)\rfloor}$ and, finally, for $x = i{-}1$ until we find $\mu(i)$. Thus, the algorithm runs in $O(\sum_{j=0}^{\lceil\log\mu(i)\rceil} 2^j) = O(\mu(i))$ time and space.
\end{proof}

\section{$O(n\log n)$ Time Bound}\label{SectAnalysis}

During the execution, Algorithm~1 calculates local periods at some positions. Let $S$ be the sequence of all such positions in the input string $w$ in increasing order. It is easy to see that the running time of the whole algorithm is $O(n + \sum_{i\in S}\mu(i))$. Thus, to prove that Algorithm~1 works in $O(n\log n)$ time, it suffices to show that $\sum_{i\in S} \mu(i) = O(n\log n)$. Simplifying the discussion, we exclude from $S$ all positions $i$ such that $\mu(i) = 1$.

Fix an arbitrary number $q$. Denote by $T(q)$ the maximal sum $\sum_{i\in S'} \mu(i)$ among all contiguous subsequences $S'$ of $S$ such that $\mu(i) \le q$ for each $i\in S'$. We are to show that $T(q) = O(q\log q)$, which immediately implies $\sum_{i\in S} \mu(i) = O(n\log n)$ since the number $q$ is arbitrary and $T(n) = \sum_{i\in S} \mu(i)$.

For further investigation, we need three additional combinatorial lemmas. Consider a position $i$ of $w$ with internal local period $\mu(i) > 1$. Informally, Lemma~\ref{LeftExt} shows that at the positions $(i..i{+}\mu(i))$ any internal local period that ``intersects'' the position $i$ and is not equal to $\mu(i)$ is either ``very short'' ($<\frac{1}2\mu(i)$) or ``very long'' ($\ge 2\mu(i)$). Lemma~\ref{Break} claims that always there is a ``long'' local period centered at $(i..i{+}\mu(i))$; moreover, this local period either is equal to $\mu(i)$ or is ``very long'' ($\ge 2\mu(i)$). Lemma~\ref{LeftExtBound} connects the bounds on the internal local periods that ``intersect'' the position $i$, as in Lemma~\ref{LeftExt}, and those local periods that do not ``intersect'' the position $i$. Now let us formulate these facts precisely.

\begin{lemma}
Let $i$ be a position of $w$ with internal local period $\mu(i) > 1$. For any $j \in (i..i{+}\mu(i))$ such that $j - \mu(j) < i$ and $\mu(j) \ne \mu(i)$, we have either $\mu(j) < \frac{1}2\mu(i)$ or $\mu(j) \ge 2\mu(i)$.\label{LeftExt}
\end{lemma}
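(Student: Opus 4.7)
The plan is to argue by contradiction. Write $p = \mu(i)$ and $d = j - i \in [1, p-1]$, and suppose $\mu(j) = q$ with $q \ne p$ and $\tfrac{1}{2}p \le q < 2p$, so that $d < q$ from the hypothesis $j - \mu(j) < i$. Since $\mu(i)$ is internal, Lemma~\ref{Unbordered} gives $u := w[i..i+p-1]$ unbordered and $w[i-p..i-1] = u$ as well, so $w[i-p..i+p-1] = uu$. I will derive a contradiction by exhibiting either a period strictly smaller than $q$ at $j$ (contradicting the minimality of $\mu(j) = q$) or a nontrivial border of $u$ (contradicting unbordered-ness).

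The main technique is a chain combining the period-$q$ relation at $j$ with the period-$p$ relation on $uu$. In the key subcase $\tfrac{1}{2}p < q < p$ with $d + q \le p$ (so that the period-$q$ square at $j$ lies inside $uu$), for any $m \in [j - (2q - p), j - 1]$ one has
\begin{align*}
w[m] &= w[m+q]    &&\text{(period $q$ at $j$; $m \in [j-q, j-1]$)} \\
     &= w[m+q-p]  &&\text{(period $p$ of $uu$; $m+q \in [i, i+p-1]$, using $d+q \le p$)} \\
     &= w[m+2q-p] &&\text{(period $q$ at $j$ again; $m+q-p \in [j-q, j-1]$, using $q \le p$).}
\end{align*}
Thus $2q - p$ is a period at $j$, and $0 < 2q - p < q$ contradicts $\mu(j) = q$. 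The boundary $q = p/2$ requires separate treatment: matching the two halves of the period-$q$ square against $uu$ directly forces $u = (u[1..p/2])^2$, a border of $u$.

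The remaining cases follow the same pattern but with different chains. For $q < p$ with $d + q > p$, the period-$q$ window protrudes beyond $uu$ on the right; using period $q$ at $j$ to read off the values of $w$ on the protrusion and then matching against $uu$ produces $u[l] = u[l+q]$ for $l \in [1, p-q]$, i.e.\ $u$ has period $q < p$, yielding a border. For $p < q < 2p$ with $q \le p + d$, the chain $w[m] = w[m+q] = w[m+q-p]$ (one application each of the two periodicities, with a few positions whose $m+q-p$ lies in the part of $w$ beyond $uu$ derived from the period-$q$ and period-$p$ relations) gives period $q - p < q$ at $j$. For $p + d < q < 2p$, the period-$q$ window protrudes on both sides of $uu$, and matching positions within $uu$'s left half on both sides of the shift $q - p$ forces $u$ itself to have period $q - p < p$, a border.

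The main obstacle is the three-step chain in the first subcase. A direct Fine--Wilf argument on the length-$2q$ overlap with periods $p$ and $q$ would require $2q \ge p + q - \gcd(p,q)$, which generally fails for $q \in (p/2, p)$. The chain succeeds by applying each periodicity relation exactly where it is guaranteed: the first two shifts $m \mapsto m+q \mapsto m+q-p$ stay inside $uu$ (the condition $d + q \le p$ is essential here), and the final step uses $q \le p$ to re-enter the period-$q$ window at $j$. The only technical work is to verify, subcase by subcase, that each intermediate position falls into the correct range.
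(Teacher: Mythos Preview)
Your periodicity-chain approach is genuinely different from the paper's, which instead argues by exhibiting overlapping occurrences of unbordered strings (via Lemma~\ref{Unbordered}) and, for $q=p/2$, by conjugacy with a non-primitive string. Your key three-step chain for $p/2<q<p$ with $d+q\le p$ is correct and clean, as are the arguments for $q=p/2$ and for $p+d<q<2p$. The chain idea buys a uniform arithmetic treatment, while the paper's picture-based overlaps are arguably easier to visualize.

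There is, however, a real gap in the subcase $p<q\le p+d$: your chain $w[m]=w[m+q]=w[m+q-p]$ fails not at ``a few positions'' but at \emph{all} of them, since for every $m\in[j-(q-p),\,j-1]$ one has $m+q\ge j+p=i+d+p>i+p-1$, so $m+q$ lies strictly to the right of $uu$ and the period-$p$ relation of $uu$ is never available for the second step. The fix is simply to reverse the order of the two shifts: from $q\le p+d$ one gets $m\ge j-(q-p)\ge i$, so $m\in[i,\,i+p-1]$ and $w[m]=w[m-p]$ by period $p$ of $uu$; then $m-p\in[j-q,\,j-1]$, so period $q$ at $j$ gives $w[m-p]=w[m-p+q]=w[m+(q-p)]$, yielding the desired period $q-p$ at $j$. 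A shorter route---the one the paper takes---is to observe that in this subcase $w[j-q..j-1]$ lies entirely inside $uu$ (since $j-q\ge i-p$), hence has period $p<q$, yet is unbordered by Lemma~\ref{Unbordered} (here $\mu(j)$ is not left external); contradiction. By the same mechanism your subcase $q<p$, $d+q>p$ is also vacuous (indeed your own argument there, that $u$ acquires period $q$, is exactly this), so only three live cases remain, matching the paper's decomposition.
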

\begin{proof}
The proof is essentially the same as in~\cite[Lemma 2]{ShurPetrova}. Let $\mu(j) \ge \frac{1}2\mu(i)$. Suppose $\mu(j) = \frac{1}2\mu(i)$. Since, by Lemma~\ref{Unbordered}, the string $w[i..i{+}\mu(i){-1}]$ is unbordered and hence cannot have the period $\mu(j) < \mu(i)$, we obtain $j + \mu(j) < i + \mu(i)$. The string $w[j{-}\mu(j)..j{+}\mu(j){-}1]$ is not primitive and has the length $\mu(i)$. Thus, the string $w[i..i{+}\mu(i){-}1]$ is a conjugate of $w[j{-}\mu(j)..j{+}\mu(j){-}1]$ and therefore is not primitive, a contradiction.

\begin{figure}[htb]
\center
\small a\includegraphics[scale=0.55,clip,trim=0 0 0 0]{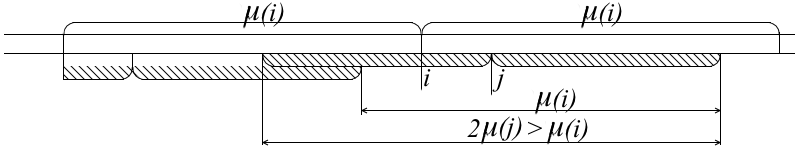}
\small b\includegraphics[scale=0.55,clip,trim=0 0 0 0]{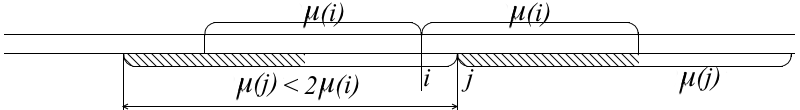}
\caption{Two impossible cases in Lemma~\ref{LeftExt}: (a) $\mu(i)/2 < \mu(j) < \mu(i)$, (b) $\mu(i) < \mu(j) < 2\mu(i)$.}
\label{fig:smallperiod}
\end{figure}
Now suppose $\mu(i)/2 < \mu(j) < \mu(i)$. As above, we have $j + \mu(j) < i + \mu(i)$. Thus, the string $w[j..j{+}\mu(j){-}1]$ has an occurrence $w[j{-}\mu(i)..j{-}\mu(i){+}\mu(j){-}1]$ that overlaps the string $w[j{-}\mu(j)..j{-}1] = w[j..j{+}\mu(j){-}1]$ because $2\mu(j) > \mu(i)$ (see Figure~\ref{fig:smallperiod} a). But, by Lemma~\ref{Unbordered}, $w[j{-}\mu(j)..j{-}1]$ is unbordered and therefore cannot overlap its own copy. This is a contradiction.

Finally, suppose $\mu(j) > \mu(i)$. By Lemma~\ref{Unbordered}, $w[j{-}\mu(j)..j{-}1]$ is unbordered. If $j - \mu(j) \ge i - \mu(i)$, then $w[j{-}\mu(j)..j{-}1]$ has the period $\mu(i) < \mu(j)$, a contradiction. Hence, we have $j - \mu(j) < i - \mu(i)$. If $\mu(j) < 2\mu(i)$, then the string $w[j..i{+}\mu(i){-}1]$, which is a suffix of $w[i..i{+}\mu(i){-}1]$, has an occurrence $w[j{-}\mu(j)..i{+}\mu(i){-}\mu(j){-}1]$ that overlaps $w[i{-}\mu(i)..i{-}1] = w[i..i{+}\mu(i){-}1]$ (see Figure~\ref{fig:smallperiod} b). This is a contradiction because, by Lemma~\ref{Unbordered}, $w[i{-}\mu(i)..i{-}1]$ is unbordered.
\end{proof}

\begin{lemma}
Let $i$ be a position of $w$ with internal local period $\mu(i) > 1$. Then there exists $j \in (i..i{+}\mu(i))$ such that either $\mu(j) = \mu(i)$ or $\mu(j) \ge 2\mu(i)$.\label{Break}
\end{lemma}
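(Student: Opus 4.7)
The plan is to reduce the claim to Theorem~\ref{CritFact} applied to the factor $u = w[i..i+\mu(i)-1]$. Since $\mu(i)$ is internal (in particular not right external), Lemma~\ref{Unbordered} yields that $u$ is unbordered; being unbordered, its minimal period equals $|u| = \mu(i) > 1$. Theorem~\ref{CritFact} then guarantees a critical point of $u$ in any window of $\mu(i)-1$ consecutive positions, and I would apply it to the positions $[2..\mu(i)]$ of $u$ to obtain a critical point $j'$ (position $1$ is excluded because its local period trivially equals $1 < \mu(i)$). Setting $j = i + j' - 1$ gives $j \in (i..i+\mu(i))$, the candidate position required by the statement.

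Next I would show that $\mu(j) \ge \mu(i)$. For $d = \mu(j)$, the window $w[\max(1,j-d)..\min(n,j+d-1)]$ has period $d$, and every substring of a string of period $d$ also has period $d$. Restricting this window to $[i..i+\mu(i)-1]$ produces the window $u[\max(1,j'-d)..\min(|u|,j'+d-1)]$, which therefore has period $d$ in $u$. By the definition of local period this gives $\mu_u(j') \le \mu(j)$; since $j'$ is critical in $u$, $\mu_u(j') = |u| = \mu(i)$, and so $\mu(j) \ge \mu(i)$.

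If $\mu(j) = \mu(i)$ the lemma is proved. Otherwise $\mu(j) > \mu(i)$, and then $j - \mu(j) < j - \mu(i) < i$ because $j < i + \mu(i)$, so the hypotheses of Lemma~\ref{LeftExt} are met. That lemma leaves two possibilities, $\mu(j) < \tfrac{1}{2}\mu(i)$ or $\mu(j) \ge 2\mu(i)$; the first is incompatible with $\mu(j) > \mu(i)$, so $\mu(j) \ge 2\mu(i)$, finishing the proof. The main delicate point I expect is the bound $\mu_u(j') \le \mu(j)$: one must verify carefully that truncating the $w$-window for $\mu(j)$ to $u$ really produces an admissible window for the local-period definition in $u$, in particular when $\mu(j)$ is external with respect to $u$ so that the truncation happens on one or both sides. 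Once this bookkeeping is handled, the rest is just combining Theorem~\ref{CritFact} with Lemma~\ref{LeftExt}.
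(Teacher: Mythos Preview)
Your proof is correct and follows essentially the same approach as the paper: show $u=w[i..i{+}\mu(i){-}1]$ is unbordered via Lemma~\ref{Unbordered}, invoke Theorem~\ref{CritFact} on $u$ to obtain a critical position $j\in(i..i{+}\mu(i))$, note that the local period in $u$ lower-bounds the local period in $w$ so that $\mu(j)\ge\mu(i)$, and finish with Lemma~\ref{LeftExt}. Your truncation argument for $\mu_u(j')\le\mu(j)$ is precisely the justification behind the paper's terse ``Observe that $\mu'(j)\le\mu(j)$''; the only cosmetic difference is that the paper applies Theorem~\ref{CritFact} directly to the $\mu(i){-}1$ positions $(i..i{+}\mu(i))$ without separately discussing position~$1$ of $u$.
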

\begin{proof}
By Lemma~\ref{Unbordered}, the string $w[i..i{+}\mu(i){-}1]$ is unbordered and its minimal period is $\mu(i)$. For any position $j \in (i..i{+}\mu(i))$, denote by $\mu'(j)$ the local period in $j$ with respect to the substring $w[i..i{+}\mu(i){-}1]$. Observe that $\mu'(j) \le \mu(j)$. By Theorem~\ref{CritFact}, there is $j \in (i..i{+}\mu(i))$ such that $\mu'(j) = \mu(i)$ and $j - \mu'(j) < i$. Hence, we have $\mu(j) \ge \mu(i)$ and, moreover, if $\mu(j) > \mu(i)$, then, by Lemma~\ref{LeftExt}, $\mu(j) \ge 2\mu(i)$.
\end{proof}

\begin{lemma}
Let $i$ be a position of $w$ with internal local period $\mu(i) > 1$. Fix $j \in (i..i{+}\mu(i))$. Then, for any $h \in (i..j]$ such that $\mu(h) > 1$, we have $\mu(h) \le \max\{\mu(h') \colon h' \in (i..j]\text{ and }h' - \mu(h') < i\}$.\label{LeftExtBound}
\end{lemma}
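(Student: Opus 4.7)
The plan is to prove $\mu(h) \le M$ for $M = \max\{\mu(h'): h' \in (i,j],\ h'-\mu(h') < i\}$ by downward strong induction on $h \in (i, j]$ (restricted to positions with $\mu(h) > 1$). The core observation is that if $h - \mu(h) < i$ then $h$ itself belongs to the set defining $M$, so the task reduces to producing, for each $h$ with $h - \mu(h) \ge i$, some $h'' \in (i, h)$ with $\mu(h'') \ge \mu(h) > 1$; the inductive hypothesis then yields $\mu(h) \le \mu(h'') \le M$. The base $h = i + 1$ is automatic, since $\mu(h) > 1$ forces $h - \mu(h) \le i - 1 < i$.

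For the inductive step (assuming $h - \mu(h) \ge i$), I plan to apply Theorem~\ref{CritFact} to the substring
\[
v = w[\,h - \mu(h)\,..\,\min\{n,\ h + \mu(h) - 1\}\,].
\]
Because $h - \mu(h) \ge i \ge 1$, the local period $\mu(h)$ is not left external in $w$, so Lemma~\ref{Unbordered} makes the length-$\mu(h)$ prefix $w[h - \mu(h)..h-1]$ of $v$ unbordered. This forces the minimal period of $v$ to be exactly $\mu(h)$: any period $p' < \mu(h)$ of $v$ would restrict to a period of the unbordered prefix, which is impossible. Theorem~\ref{CritFact} then supplies a critical point $c$ of $v$ inside the $\mu(h) - 1$ consecutive positions $\{2, 3, \ldots, \mu(h)\}$ of $v$, at which $\mu_v(c) = \mu(h)$. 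Setting $c' := h - \mu(h) + c - 1$, I get $c' \in [h - \mu(h) + 1, h - 1]$, so $c' > i$ (by the Case~B assumption) and $c' < h \le j$, hence $c' \in (i, h) \subseteq (i, j]$. A direct inspection shows $\mu_v(c) \le \mu(c')$: any substring witnessing period $\mu(c')$ at $c'$ in $w$ contains, by restriction, a substring witnessing the same period at $c$ in $v$. Therefore $\mu(c') \ge \mu_v(c) = \mu(h)$, and the choice $h'' := c'$ closes the inductive step.

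The only genuinely subtle point I anticipate is verifying that $v$ has minimal period $\mu(h)$ when $\mu(h)$ is right external in $w$, for then $|v| < 2\mu(h)$ and $v$ is a truncated, rather than a full, square. The unbordered-prefix argument handles this uniformly because $|v| \ge \mu(h) + 1 > \mu(h)$ always holds ($v$ contains the unbordered prefix of length $\mu(h)$ and at least the letter at position $h$). Termination of the induction is immediate from $h'' < h$ and the finiteness of $(i, j]$, and the remaining verifications (the inequality $\mu_v(c) \le \mu(c')$ and the applicability of Theorem~\ref{CritFact} to $v$) are essentially book-keeping.
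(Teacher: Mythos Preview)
Your argument is correct and is essentially the paper's proof in a different wrapper. Both proofs hinge on the same step: for any $h\in(i..j]$ with $\mu(h)>1$ and $h-\mu(h)\ge i$, produce a position $h''\in(h-\mu(h)..h)\subseteq(i..h)$ with $\mu(h'')\ge\mu(h)$. You obtain $h''$ by applying Theorem~\ref{CritFact} directly to the string $v$ (whose minimal period you correctly pin down via the unbordered prefix $w[h-\mu(h)..h-1]$); the paper obtains it by invoking the symmetric version of Lemma~\ref{Break}, whose proof is precisely that application of Theorem~\ref{CritFact}. You then close by strong induction on $h$, while the paper takes the leftmost counterexample and derives a contradiction---logically the same descent. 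One terminological nit: your induction is ordinary (upward) strong induction, not ``downward'' induction, since you invoke the hypothesis at the smaller position $h''<h$; the base $h=i+1$ and the direction of the step make this clear.
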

\begin{proof}
Suppose, to the contrary, there is $h \in (i..j]$ such that $\mu(h) > 1$ and $\mu(h) > \max\{\mu(h') \colon h' \in (i..j]\text{ and }h' - \mu(h') < i\}$; let $h$ be the leftmost such position. Then, we have $h - \mu(h) \ge i$. Using a symmetrical version of Lemma~\ref{Break}, we obtain $h' \in (h{-}\mu(h)..h)$ such that $\mu(h') \ge \mu(h)$. Since $\mu(h') \ge \mu(h)$, by the definition of $h$, we have $h' - \mu(h') \ge i$. This contradicts to the choice of $h$ as the leftmost position with the given properties because $h' < h$ and $h' \in (i..j]$.
\end{proof}

Hereafter, $S' = \{i_1, i_2, \ldots, i_z\}$ denotes a contiguous subsequence of $S$ such that $\mu(i_j) \le q$ for each $j\in [1..z]$ and $T(q) = \sum_{j=1}^z \mu(i_j)$. We associate with each $i_j$ the numbers $r_j = \max\{r \colon w[i_j{-}\mu(i_j)..r{-}1]$ has the period $\mu(i_j)\}$ and $c_j = \max\{c \le r_j{-}\mu(i_j) \colon w[c..c{+}\mu(i_j){-}1]$ is unbordered$\}$ (see Figure~\ref{fig:icr}). By Lemma~\ref{Unbordered}, the string $w[i_j..i_j{+}\mu(i_j){-}1]$ is unbordered and therefore $c_j \ge i_j$. Since $w[c_j..c_j{+}\mu(i_j){-}1]$ is unbordered and $w[c_j{-}\mu(i_j)..c_j{-}1] = w[c_j..c_j{+}\mu(i_j){-}1]$, we have $\mu(c_j) = \mu(i_j)$. Since $w[r_j{-}\mu(i_j)..r_j{-}1]$ is primitive, it follows from Lemma~\ref{PrimitiveCrit} that $c_j > r_j - 2\mu(i_j)$. Algorithm~1 skips the positions $i_j + 1, i_j + 2, \ldots, r_j - \mu(i_j)$ in the loop in lines~\ref{lst:skip1}--\ref{lst:skip2}.
\begin{figure}[htb]
\center
\includegraphics[scale=0.55]{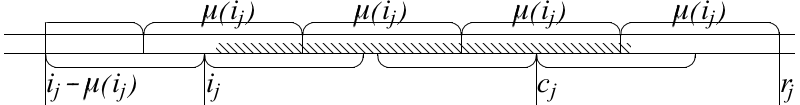}
\caption{The positions $i_j{+}1, i_j{+}2, \ldots, r_j{-}\mu(i_j)$ are shaded.}
\label{fig:icr}
\end{figure}
\begin{lemma}
For any $j\in [1..z]$ and $i \in (c_j..c_j{+}\mu(c_j))$, we have $\mu(i) \ne \mu(c_j)$.\label{cjump}
\end{lemma}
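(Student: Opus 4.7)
The plan is to argue by contradiction: suppose $\mu(i) = \mu(c_j) =: \mu$ for some $i = c_j + d$ with $d \in [1,\mu-1]$. I would exploit two structural facts. First, $u := w[c_j..c_j + \mu - 1]$ is unbordered by the definition of $c_j$, and since $w[i_j - \mu..r_j - 1]$ has period $\mu$, every length-$\mu$ substring whose starting position lies in $[i_j - \mu, r_j - \mu]$ is a cyclic rotation of $u$. Second, by the maximality of $c_j$ among positions $c \le r_j - \mu$ with $w[c..c+\mu-1]$ unbordered, every window $w[c..c + \mu - 1]$ with $c_j < c \le r_j - \mu$ must be bordered.

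I would then split on whether $i \le r_j - \mu$. In the interior case $i \le r_j - \mu$, the window $w[i..i + \mu - 1]$ is one of the bordered rotations identified above, while the chain $i + \mu - 1 \le r_j - 1 \le n$ shows that $\mu(i) = \mu$ is not right external, so Lemma~\ref{Unbordered} forces $w[i..i + \mu - 1]$ to be unbordered -- a contradiction. In the boundary case $i > r_j - \mu$, the window $w[i..i + \mu - 1]$ extends past $r_j - 1$; assuming $\mu(i)$ is not right external, the square $w[i - \mu..i + \mu - 1]$ has period $\mu$. Because it overlaps the period-$\mu$ block $[i_j - \mu, r_j - 1]$ in the subrange $[i - \mu, r_j - 1]$, whose length $r_j - i + \mu$ is at least $\mu$ (using $i \le c_j + \mu - 1 \le r_j - 1$), a standard period-merging argument yields that $w[i_j - \mu..i + \mu - 1]$ has period $\mu$. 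Since $i + \mu > r_j$, this violates the maximality of $r_j$.

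The main obstacle I foresee is the narrow corner case inside the boundary regime where $i + \mu - 1 > n$, i.e., $\mu(i) = \mu$ would be right external; then the merging argument above only produces $r_j = n + 1$, not an immediate contradiction. I plan to dispatch this either by arguing via the cyclic-rotation structure of $u$ that some $\mu' < \mu$ already realises the local period at $i$ (forcing $\mu(i) < \mu$ and collapsing the assumption), or by showing that $r_j = n + 1$ cannot coexist with the standing hypotheses under which $c_j$ is introduced for the $T(q)$ bound.
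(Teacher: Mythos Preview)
Your approach is essentially the paper's, only presented with a finer case split. The paper's proof does not separate ``interior'' from ``boundary'': it simply writes the square equality $w[i-\mu(i)..i-1] = w[i..i+\mu(i)-1]$ and immediately infers $i \le r_j - \mu(i_j)$ from the maximality of $r_j$ (this one line is exactly your period-merging step), then applies Lemma~\ref{Unbordered} to conclude that $w[i..i+\mu(i)-1]$ is unbordered, contradicting the maximality of $c_j$. So your interior case and your boundary-but-not-right-external case are collapsed into a single sentence in the paper.

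Regarding the corner case $i+\mu-1>n$ (forcing $r_j=n+1$): the paper does not address it either; it tacitly writes the full square as though $\mu(i)$ were internal. In every place the lemma is invoked, however, the relevant $i$ provably has internal local period---either because $i\in S$ (so the algorithm did not stop at $i$), or because some later position in $(c_j..c_j+\mu(c_j))$ carries an internal local period $\ge 2\mu(i_j)$, which forces $i+\mu(i_j)-1\le n$. Hence the edge case never arises in the applications, and your two proposed fixes, while vaguely stated, are not actually needed. Your write-up is in this sense more scrupulous than the paper's own proof; if you want to close the gap cleanly, the simplest route is just to add the hypothesis that $\mu(i)$ is internal (which is all the paper ever uses) rather than pursue the rotation or $r_j=n+1$ arguments.
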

\begin{proof}
For converse, suppose $\mu(i) = \mu(c_j)$. Since $w[i{-}\mu(i)..i{-}1] = w[i..i{+}\mu(i){-}1]$ and $\mu(i) = \mu(c_j) = \mu(i_j)$, by the definition of $r_j$, we have $i \le r_j - \mu(i_j)$. It follows from Lemma~\ref{Unbordered} that $w[i..i{+}\mu(i){-}1]$ is unbordered. This contradicts to the definition of $c_j$ because $c_j < i \le r_j - \mu(i_j)$.
\end{proof}

To estimate the sum $\sum_{j=1}^z \mu(i_j)$, we construct a subsequence $i_{s_1}, i_{s_2}, \ldots, i_{s_t}$ by the following inductive process. Choose $i_{s_1} = i_1$. Suppose we have already constructed a subsequence $i_{s_1}, i_{s_2}, \ldots, i_{s_j}$. Choose the minimal number $i' \in (c_{s_j}..c_{s_j}{+}\mu(c_{s_j}))$ such that $\mu(i') \ge \mu(c_{s_j})$. By Lemma~\ref{Break}, such number always exists. If $i' > i_z$, we set $t = j$ and stop the process. Let $i' \le i_z$. It follows from Lemma~\ref{cjump} that $\mu(i') \ne \mu(c_{s_j})$. Hence, by Lemma~\ref{Break}, $\mu(i') \ge 2\mu(c_{s_j}) = 2\mu(i_{s_j})$. Since $\mu(i') > \mu(i_{s_j})$, it follows from the definition of $r_{s_j}$ that $i' > r_{s_j} - \mu(i_{s_j})$. Therefore, Algorithm~1 does not skip $i'$ and $i' \in S$. Since $\{i_1, i_2, \ldots, i_z\}$ is a contiguous subsequence of $S$, we have $i' = i_{j'}$ for some $j' \in [1..z]$. Set $i_{s_{j+1}} = i_{j'}$.

Now we can prove that the running time of Algorithm~1 is $O(n\log n)$. For any $j\in [1..t)$, we have $\mu(i_{s_{j+1}}) \ge 2\mu(i_{s_j})$ and therefore $\sum_{j=1}^t \mu(i_{s_j}) \le \mu(i_{s_t}) + \frac{1}2\mu(i_{s_t}) + \frac{1}{2^2}\mu(i_{s_t}) + \cdots \le 2\mu(i_{s_t}) \le 2q$. Further, let $h \in [1..z]$ and $i_{s_j} < i_h < i_{s_{j+1}}$ for some $j \in [1..t)$. Since Algorithm~1 skips the positions $(i_{s_j}..c_{s_j}]$ and $i_{s_{j+1}} \in (c_{s_j}..c_{s_j}{+}\mu(c_{s_j}))$, it follows that $i_h \in (c_{s_j}..c_{s_j}{+}\mu(c_{s_j}))$. Recall that $i_{s_{j+1}}$ is the minimal number from $(c_{s_j}..c_{s_j}{+}\mu(c_{s_j}))$ such that $\mu(i_{s_{j+1}}) \ge \mu(c_{s_j})$. Thus, by Lemmas~\ref{LeftExt} and~\ref{LeftExtBound}, we have $\mu(i_h) < \frac{1}2 \mu(c_{s_j}) = \frac{1}2 \mu(i_{s_j})$. In the same way, for $h \in [1..z]$ such that $i_h > i_{s_t}$, we have $\mu(i_h) < \frac{1}2 \mu(i_{s_t})$. So, we obtain the following recursion:
\begin{equation}
T(q) \le 2q + T\left(\frac{1}2 \mu(i_{s_1})\right) + T\left(\frac{1}2 \mu(i_{s_2})\right) + \cdots + T\left(\frac{1}2 \mu(i_{s_t})\right)\enspace.
\label{eq:teq}
\end{equation}
Consider a recursion $T(q) = O(q) + \sum_{j=1}^t T(q_j)$. It is well known that if the sum of the terms from the parentheses of $T(\ldots)$ in the right hand side of this recursion (i.e., $\sum_{j=1}^t q_j$) is less than or equal to $q$ and each of those terms (i.e., each $q_j$) is less than or equal to $\frac{1}2q$, then the recursion has a solution $T(q) = O(q\log q)$. Thus, since the sum of the terms from the parentheses of $T(\ldots)$ in the right hand side of~(\ref{eq:teq}) is equal to $\frac{1}2\sum_{j=1}^t \mu(i_{s_j}) \le q$ and each of these terms is less than or equal to $\frac{1}2 q$, we obtain $T(q) = O(q\log q)$.

\section{Problems with Linearity}\label{SectProblems}

To obtain $T(q) = O(q)$, we might prove that if $2\mu(i_{s_{t-1}})$ and $\mu(i_{s_t})$ are close enough (namely, $\frac{7}3 \mu(i_{s_{t-1}}) > \mu(i_{s_t})$), the term $T(\frac{1}2 \mu(i_{s_t}))$ in (\ref{eq:teq}) is actually $T(\frac{2}3\mu(i_{s_{t-1}})) \le T(\frac{1}3\mu(i_{s_t}))$; this fact would imply that the sum of the terms in the parentheses of $T(\ldots)$ in the right hand side of (\ref{eq:teq}) is less than $\alpha q$ for some constant $\alpha < 1$ and therefore $T(q) = O(q)$. Unfortunately, this is not true for Algorithm~1. Nevertheless, we prove a restricted version of the mentioned claim. It reveals problems that may arise in the current solution and points out a way to improvements.

\begin{lemma}
Let $i\in (c_{s_t}..c_{s_t}{+}\mu(c_{s_t}))$. Suppose $\mu(i') < \mu(c_{s_t})$ and $\mu(i') \ne \mu(i_{s_{t-1}})$ for each $i' \in (c_{s_t}..i]$. If $\frac{7}3 \mu(i_{s_{t-1}}) > \mu(i_{s_t})$, then $\mu(i) < \frac{2}3 \mu(i_{s_{t-1}})$.\label{MainLemma}
\end{lemma}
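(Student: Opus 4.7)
The plan is to exploit Lemma~\ref{LeftExt} first at the pivot $c_{s_t}$ (which has $\mu(c_{s_t}) = \mu(i_{s_t})$) and then, for the sharpening, at a pivot carrying the smaller local period $\mu(i_{s_{t-1}})$. I would begin by invoking Lemma~\ref{LeftExtBound} to reduce the statement to the case of positions $h \in (c_{s_t}..i]$ satisfying $h - \mu(h) < c_{s_t}$, because any upper bound attained on such positions automatically dominates $\mu(i)$. For any such $h$ the hypothesis $\mu(h) < \mu(c_{s_t})$ rules out the ``very long'' alternative of Lemma~\ref{LeftExt} applied with pivot $c_{s_t}$, and since $\mu(h) \ne \mu(c_{s_t})$ we obtain the quick bound $\mu(h) < \tfrac12 \mu(i_{s_t})$.

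Combined with the assumption $\mu(i_{s_t}) < \tfrac73 \mu(i_{s_{t-1}})$, this only yields the weaker estimate $\mu(h) < \tfrac76 \mu(i_{s_{t-1}})$, so the real work lies in excluding the window $[\tfrac23\mu(i_{s_{t-1}}), \tfrac76\mu(i_{s_{t-1}}))$. The hypothesis $\mu(h) \ne \mu(i_{s_{t-1}})$ splits this window into $[\tfrac23\mu(i_{s_{t-1}}), \mu(i_{s_{t-1}}))$ and $(\mu(i_{s_{t-1}}), \tfrac76\mu(i_{s_{t-1}}))$, and in each case I would argue by contradiction. The setup is the unbordered factor $w[h - \mu(h)..h + \mu(h) - 1]$ with period $\mu(h)$ (Lemma~\ref{Unbordered}), sitting inside the $\mu(i_{s_t})$-periodic region $w[c_{s_t} - \mu(i_{s_t})..c_{s_t} + \mu(i_{s_t}) - 1]$, whose right half $w[c_{s_t}..c_{s_t} + \mu(i_{s_t}) - 1]$ is unbordered.

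By sliding the $\mu(h)$-periodic factor by $\pm \mu(i_{s_t})$ using that left--right equality and comparing with the original copy, one should obtain an overlap that either contradicts the unborderedness of the block at $c_{s_t}$ (an application of Lemma~\ref{Unbordered} to a suitable internal pivot), or manufactures a position $i' \in (c_{s_t}..i]$ at which $w[i' - \mu(i_{s_{t-1}})..i' + \mu(i_{s_{t-1}}) - 1]$ is an unbordered square of size $\mu(i_{s_{t-1}})$, forcing $\mu(i') = \mu(i_{s_{t-1}})$ and contradicting the hypothesis. The hard part will be the periodic bookkeeping at this step: the constant $\tfrac73$ appears tuned so that $2\mu(h) > \mu(i_{s_t}) - \mu(i_{s_{t-1}}) = \tfrac43 \mu(i_{s_{t-1}})$ precisely when $\mu(h) > \tfrac23 \mu(i_{s_{t-1}})$, which is the exact threshold at which a shifted copy of the $\mu(h)$-periodic factor must overlap a translate of itself by $\mu(i_{s_t}) - \mu(i_{s_{t-1}})$ inside the block, producing the forbidden $\mu(i_{s_{t-1}})$-periodicity. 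A separate but parallel subcase analysis for $\mu(h) < \mu(i_{s_{t-1}})$ versus $\mu(h) > \mu(i_{s_{t-1}})$, together with a symmetric variant of Lemma~\ref{Break}, is likely to be needed to pinpoint where the forced square is centered.
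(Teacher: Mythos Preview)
Your opening moves coincide with the paper's: reduce via Lemma~\ref{LeftExtBound} to positions $h\in(c_{s_t}..i]$ with $h-\mu(h)<c_{s_t}$, then apply Lemma~\ref{LeftExt} at the pivot $c_{s_t}$ to get $\mu(h)<\tfrac12\mu(i_{s_t})$. From there, however, the paper does not argue by ``sliding'' the $\mu(h)$-periodic factor abstractly, nor does it ever try to manufacture a position $i'$ with $\mu(i')=\mu(i_{s_{t-1}})$. Instead it makes the structure completely explicit: set $a=w[c_{s_{t-1}}..c_{s_{t-1}}{+}\mu(i_{s_{t-1}}){-}1]$ (unbordered, by Lemma~\ref{Unbordered}) and let $b$ be the complementary piece so that $\mu(i_{s_t})=|aab|$; then $|b|<\tfrac13|a|$ is exactly the content of $\tfrac73\mu(i_{s_{t-1}})>\mu(i_{s_t})$. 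The square $w[h{-}\mu(h)..h{+}\mu(h){-}1]$ is shown to lie inside the concrete string $u=aab\,aab\,aab$, and the whole lemma is reduced to a purely combinatorial \emph{Claim}: any internal local period of $u$ that is $<|ab|$ and $\ne|a|$ is in fact $<\tfrac23|a|$. The Claim is dispatched by a short case split on whether the centre lies in an occurrence of $a$ or of $b$, each time producing an overlap that contradicts the unborderedness of $a$ (or of $w[h{-}\mu(h)..h{-}1]$).

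So the missing ingredient in your sketch is precisely this $a,b$ decomposition and the reduction to the fixed model string $u=aabaabaab$; it is what turns the ``hard part'' into two concrete overlap pictures rather than an abstract sliding argument. Your alternative tactic of forcing $\mu(i')=\mu(i_{s_{t-1}})$ at some $i'\le i$ does not appear in the paper, and as stated it is not obviously sound: exhibiting a square of side $\mu(i_{s_{t-1}})$ centred at $i'$ only gives $\mu(i')\le\mu(i_{s_{t-1}})$, and promoting this to equality requires the half of that square to be unbordered, which your sketch does not secure. (Also, $\mu(i_{s_t})-\mu(i_{s_{t-1}})<\tfrac43\mu(i_{s_{t-1}})$ is an inequality, not an equality.) The paper's route avoids this issue entirely because all contradictions are with unborderedness of $a$ itself, which is known from the outset.
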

\begin{proof}
Recall that $2\mu(c_{s_{t-1}}) \le \mu(i_{s_t})$. Denote $a = w[c_{s_{t-1}} .. c_{s_{t-1}}{+}\mu(c_{s_{t-1}}){-}1]$ and $b = w[c_{s_{t-1}}{+}\mu(c_{s_{t-1}}) .. c_{s_{t-1}}{-}\mu(c_{s_{t-1}}){+}\mu(i_{s_t}){-}1]$ (see Figure~\ref{fig:notation}). Note that $\mu(c_{s_{t-1}}) = |a|$ and $\mu(c_{s_t}) = |aab|$. It follows from Lemma~\ref{Unbordered} that $a$ is unbordered. Since, by Lemma~\ref{Unbordered}, the string $w[i_{s_t}..i_{s_t}{+}\mu(i_{s_t}){-}1]$ is unbordered, the string $b$ is not empty. The inequality $\frac{7}3|a| = \frac{7}3 \mu(i_{s_{t-1}}) > \mu(i_{s_t}) = |baa|$ implies $|b| < \frac{1}3|a|$.

\begin{figure}[htb]
\center
\includegraphics[scale=0.55]{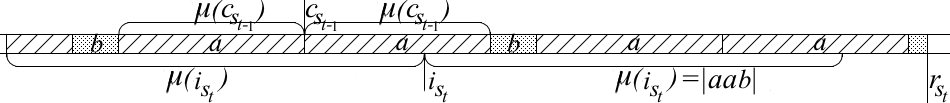}
\caption{The strings $a$ and $b$.}
\label{fig:notation}
\end{figure}

In view of Lemma~\ref{LeftExtBound}, it suffices to prove the lemma only for the positions $i$ such that $i - \mu(i) < c_{s_t}$. So, assume $i - \mu(i) < c_{s_t}$. Since $\mu(i) < \mu(c_{s_t})$, it follows from Lemma~\ref{LeftExt} that $\mu(i) < \frac{1}2 \mu(c_{s_t}) = \frac{1}2 |baa| < |ab|$. Since, by Lemma~\ref{Unbordered}, $w[c_{s_t}..c_{s_t}{+}\mu(c_{s_t}){-}1]$ is unbordered and thus cannot have the period $\mu(i) < \mu(c_{s_t})$, we obtain $i + \mu(i) < c_{s_t} + \mu(c_{s_t})$. So, $w[i{-}\mu(i)..i{+}\mu(i){-}1]$ is a substring of the string $w[i_{s_t}{-}\mu(i_{s_t})..r_{s_t}{-}1]$. Therefore, since $w[i_{s_t}{-}\mu(i_{s_t})..r_{s_t}{-}1]$ has the period $\mu(i_{s_t}) = \mu(c_{s_t}) = |aab|$, the string $w[i{-}\mu(i)..i{+}\mu(i){-}1]$ is a substring of the string $u = aabaabaab$ (see Figure~\ref{fig:notation}). Thus, to finish the proof, it suffices to prove the following claim. 

\emph{Claim. Let $i$ be a position of $u$ with internal local period $\mu(i)$ (the local period at $i$ is with respect to the string $u$). If $\mu(i) < |ab|$ and $\mu(i) \ne |a|$, then $\mu(i) < \frac{2}3 |a|$.}

Let $i$ be a position of $u$ with internal local period $\mu(i)$ such that $\mu(i) < |ab|$ and $\mu(i) \ne |a|$. Consider two cases.

1) Suppose $i$ lies in an occurrence of $a$ in $u = aabaabaab$. Without loss of generality, consider the case $i \in (|aaba|..|aabaa|]$; all other cases are similar. If $i - \mu(i) \le |aaba|$, then, by Lemma~\ref{LeftExt}, we have either $\mu(i) < \frac{1}2|a|$ or $\mu(i) \ge 2|a|$. The latter is impossible because $\mu(i) < |ab| < 2|a|$ while the former implies $\mu(i) < \frac{2}3|a|$ as required. Now let $i - \mu(i) > |aaba|$. Assume, by a contradiction, that $\mu(i) \ge \frac{2}3|a|$. Then $w[i{-}\mu(i)..i{-}1]$ is a substring of $a$ and thus it has an occurrence $v = w[i{-}\mu(i){+}|ab|..i{-}1{+}|ab|]$ (see Figure~\ref{fig:centera}). Since $2\mu(i) \ge \frac{4}3|a| > |ab|$, the string $w[i..i{+}\mu(i){-}1]$, which is also an occurrence of $w[i{-}\mu(i)..i{-}1]$, overlaps $v$. This is a contradiction because $w[i{-}\mu(i)..i{-}1]$ is unbordered by Lemma~\ref{Unbordered}.
\begin{figure}[htb]
\center
\includegraphics[scale=0.55]{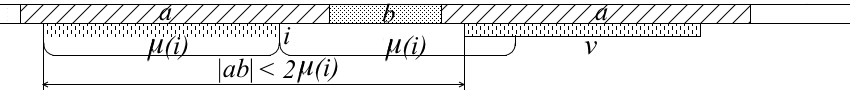}
\caption{The impossible case $i \in (|aaba|..|aabaa|]$ and $i - \mu(i) > |aaba|$ from the proof of Lemma~\ref{MainLemma}.}
\label{fig:centera}
\end{figure}

2) Suppose $i$ lies in an occurrence of $b$ in $u = aabaabaab$. Without loss of generality, consider the case $i \in (|aa|..|aab|]$. Assume, by a contradiction, that $\mu(i) \ge \frac{2}3|a|$. Suppose $i - \mu(i) > |a|$ (see Figure~\ref{fig:centerc}a). Then the string $w[i{-}\mu(i)..|aa|]$, which is a suffix of $a$, has an occurrence $v = w[i..|aa|{+}\mu(i)]$. Since $\mu(i) \ge \frac{2}3|a| > |b|$, $v$ overlaps $w[|aab|{+}1..|aaba|] = a$. Hence, $a$ has a nontrivial border, clearly a contradiction. Suppose $i - \mu(i) \le |a|$ (see Figure~\ref{fig:centerc}b). Then the string $w[|a|{+}1..|aa|] = a$ has an occurrence $v = w[|a|{+}1{+}\mu(i)..|aa|{+}\mu(i)]$. Since $\mu(i) < |ab|$ and $\mu(i) + |a| \ge \frac{5}3|a| > |ab|$, the string $w[|aab|{+}1..|aaba|] = a$ overlaps $v = a$. This is a contradiction because $a$ is unbordered.
\begin{figure}[htb]
\center
\small a\includegraphics[scale=0.55]{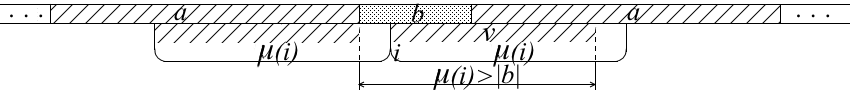}
\small b\includegraphics[scale=0.55]{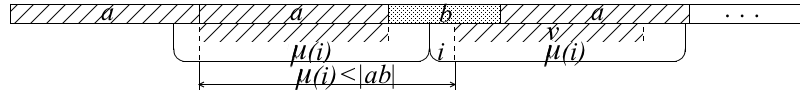}
\caption{The impossible cases for $i \in (|aa|..|aab|]$ in the proof of Lemma~\ref{MainLemma}: (a) $i - \mu(i) > |a|$; (b) $i - \mu(i) \le |a|$.}
\label{fig:centerc}
\end{figure}
\end{proof}

Let us consider how one might use Lemma~\ref{MainLemma} to obtain $T(q) = O(q)$. Suppose $t > 1$, $\frac{7}3\mu(i_{s_{t-1}}) > \mu(i_{s_t})$, and $\mu(i_h) \ne \mu(i_{s_{t-1}})$ for all $h \in (s_t..z]$. Lemma~\ref{MainLemma} implies that $\mu(i_h) < \frac{2}3\mu(i_{s_{t-1}}) \le \frac{1}3 \mu(i_{s_t})$ for each $h \in (s_t..z]$. So, combining Lemmas~\ref{LeftExt},~\ref{LeftExtBound},~\ref{MainLemma}, one can deduce the following recursion:
\begin{equation}
T(q) \le \sum_{j=1}^t \mu(i_{s_j}) + T\left(\frac{1}2\mu(i_{s_1})\right) + \cdots + T\left(\frac{1}2\mu(i_{s_{t-1}})\right) + T\left(\frac{1}3\mu(i_{s_t})\right)\enspace.
\label{eq:early_recustion}
\end{equation}
Let us estimate the sum of the terms from the parentheses of $T(\ldots)$ in the right hand side of (\ref{eq:early_recustion}). Since $\sum_{j=1}^{t-1} \mu(i_{s_j}) \le q$, we have $\frac{1}2 \mu(i_{s_1}) + \cdots + \frac{1}2 \mu(i_{s_{t-1}}) + \frac{1}3 \mu(i_{s_t}) \le \frac{1}2q + \frac{1}3 q = \frac{5}6 q$. The sum $\sum_{j=1}^t \mu(i_{s_j})$ is bounded by~$2q$. It is well known that such recursion has a solution $T(q) \le 2q + \frac{5}6 2q + (\frac{5}6)^2 2q + \cdots = O(q)$. Unfortunately, a fatal problem arises when there is $h \in (s_t..z]$ such that $\mu(i_h) = \mu(i_{s_{t-1}})$. Exploiting this case, we construct a string on which Algorithm~1 performs $\Omega(n\log n)$ operations.

\vskip3mm\noindent\textbf{Example.} Let $a_i$ and $b_i$ be sequences of strings inductively defined as follows: $a_0 = a$, $b_0 = b$ and $a_{i+1} = a_i\$_ia_i$, $b_{i+1} = b_ia_i\$_ia_ib_i$, where $a, b, \$_0, \$_1, \$_2, \ldots$ are distinct letters. Denote $w_i = a_ib_ia_i$. Note that $w_{i+1} = a_i\$_iw_i\$_iw_i\$_ia_i$; this recursive structure of $w_{i+1}$ is very important for us. Our counterexample is the string $w = \#w_{i+1}\#a_{i+1}\#$, where $\#$ is a unique special letter. Clearly, the minimal period of $w$ is $|w|{-}1$. Since $w = \#a_{i+1}b_{i+1}a_{i+1}\#a_{i+1}\#$, it is easy to see that the number $k = \max\{l \colon w[1..l] = w[j..j{+}l{-}1]\text{ for some }j\in (1..|w|)\}$ is equal to $|\#a_{i+1}|$. So, Algorithm~1 starts with the position $|\#a_{i+1}|{+}2$. Now consider some combinatorial properties of $w_i$.

\begin{lemma}
The string $w_i = a_ib_ia_i$ satisfies the following conditions:\\
(1) the local period at each of the positions $[|a_i|{+}2..|a_ib_i|]$ is internal;\\
(2) the local period at position $|a_ib_i|{+}1$ is right external.\label{ExampleLemma}
\end{lemma}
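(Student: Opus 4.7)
The plan is induction on $i$, using the identity $w_{i+1} = a_i \$_i w_i \$_i w_i \$_i a_i$. The base case $i=0$ is immediate: the range $[|a_0|+2..|a_0 b_0|]=[3..2]$ is empty, so (1) is vacuous, and for (2) a direct check on $w_0 = aba$ gives $\mu(3)=2$, which is right external because $3+2-1>|w_0|=3$. The key tool for the inductive step is that $w_{i+1}$ has period $P = |w_i|+1 = 2|a_i|+|b_i|+1$: rewriting $w_{i+1} = a_i(\$_i w_i)(\$_i w_i)(\$_i a_i)$ exhibits $w_{i+1}$ as a prefix of the infinite periodic word $a_i(\$_i w_i)(\$_i w_i)\cdots$, since the tail $\$_i a_i$ coincides with the $(|a_i|+1)$-letter prefix of $\$_i w_i = \$_i a_i b_i a_i$.

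For part (1) at level $i+1$ I would cover the target range $[|a_{i+1}|+2..|a_{i+1} b_{i+1}|] = [2|a_i|+3..4|a_i|+2|b_i|+2]$ by three consecutive sub-ranges aligned with the decomposition $w_{i+1}=a_i\$_i w_i \$_i w_i \$_i a_i$. The leftmost sub-range $[2|a_i|+3..2|a_i|+|b_i|+1]$ collects positions of $w_{i+1}$ that correspond to $w_i$-positions in $[|a_i|+2..|a_i b_i|]$ inside the first copy of $w_i$; inductive hypothesis (1) produces an internal square at the corresponding $w_i$-position, and since that square sits inside the substring $w_i$ it also sits inside $w_{i+1}$ without sticking out, so the local period in $w_{i+1}$ at such a position is internal too. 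The mirror argument applied to the second copy of $w_i$ handles the rightmost sub-range $[4|a_i|+|b_i|+4..4|a_i|+2|b_i|+2]$. The central sub-range $[P+1..|w_{i+1}|-P+1] = [2|a_i|+|b_i|+2..4|a_i|+|b_i|+3]$ is handled by the period $P$: for any $j$ in it the window $w_{i+1}[j-P..j+P-1]$ fits inside $w_{i+1}$ and, by the period property, is a square of period $P$, so $\mu(j)\le P$ and the local period is internal. A short arithmetic check shows the three sub-ranges are pairwise adjacent and together exhaust the target.

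For part (2), I would prove $\mu(j)>|a_{i+1}|=2|a_i|+1$ at $j=|a_{i+1} b_{i+1}|+1=4|a_i|+2|b_i|+3$, which is exactly the right-external condition since $|w_{i+1}|-j+1 = |a_{i+1}|$. I split the candidate $\mu$ at $|a_i|$. When $\mu\le|a_i|$ the window $w_{i+1}[j-\mu..j+\mu-1]$ lies entirely inside the second copy of $w_i$ and coincides with $w_i[|a_i b_i|+1-\mu..|a_i b_i|+\mu]$; by inductive hypothesis (2) this window does not have period $\mu$. When $|a_i|<\mu\le 2|a_i|+1$ the window extends past the second $w_i$ and captures the rightmost $\$_i$ of $w_{i+1}$, which sits at position $5|a_i|+2|b_i|+3$; a period $\mu$ would force $w_{i+1}[5|a_i|+2|b_i|+3-\mu]=\$_i$. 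But the three occurrences of $\$_i$ in $w_{i+1}$ lie at $|a_i|+1$, $3|a_i|+|b_i|+2$, and $5|a_i|+2|b_i|+3$, and for $\mu$ in the stated range the required position $5|a_i|+2|b_i|+3-\mu$ lies in $[3|a_i|+2|b_i|+2..4|a_i|+2|b_i|+2]$, which contains none of these three, a contradiction.

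The main obstacle, in my view, is the arithmetic bookkeeping in part (1): confirming that the three sub-ranges precisely tile the target and that the central one is exactly the band where the period-$P$ window fits internally. Once this alignment is verified, both halves of the lemma reduce to the inductive hypothesis plus a routine tracking of where $\$_i$ occurs in $w_{i+1}$; the bundling of the two IH claims into a single induction is essential because part (2) at level $i$ is what rules out short candidate periods at the critical position in part (2) at level $i+1$.
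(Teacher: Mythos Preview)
Your proposal is correct and follows essentially the same route as the paper: induction on $i$ via the decomposition $w_{i+1}=a_i\$_i w_i\$_i w_i\$_i a_i$, the same three-segment covering of the target range for (1) using the period $P=|a_i\$_ia_ib_i|$ (your central band $[P{+}1..|w_{i+1}|{-}P{+}1]$ is exactly the paper's $(p..|w|{-}p{+}1]$), and for (2) the same two-step argument that first invokes the inductive hypothesis to get $\mu(j)>|a_i|$ and then uses the rightmost $\$_i$ to push $\mu(j)$ past $|a_{i+1}|$. You merely spell out the arithmetic that the paper leaves implicit.
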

\begin{proof}
The proof is by induction on $i$. The base case $w_0 = aba$ is obvious. The inductive step is $w_{i+1} = a_{i+1}b_{i+1}a_{i+1} = a_i\$_ia_i\cdot b_ia_i\$_ia_ib_i\cdot a_i\$_ia_i = a_i\$_iw_i\$_iw_i\$_ia_i$. Consider condition~(1). The positions $[|a_{i+1}|{+}2..|a_{i+1}b_i|]$ correspond to the positions $[|a_i|{+}2..|a_ib_i|]$ of the first occurrence of the string $w_i = a_ib_ia_i$ in $w_{i+1}$. Hence, by the inductive hypothesis, the local periods at these positions are internal. It is obvious that $p = |a_i\$_ia_ib_i|$ is a period of $w_{i+1}$ and therefore the positions $(p..|w|{-}p{+}1]$ all have internal local periods. So, it suffices to consider the positions $[|w|{-}p{+}2..|a_{i+1}b_{i+1}|] = [|a_{i+1}b_ia_i\$_ia_i|{+}2..|a_{i+1}b_{i+1}|]$. Similarly, these positions correspond to the positions $[|a_i|{+}2..|a_ib_i|]$ of the second occurrence of the substring $w_i = a_ib_ia_i$ in $w$. Therefore, by the inductive hypothesis, all these positions have internal local periods. Consider condition~(2). Denote $j = |a_{i+1}b_{i+1}{+}1|$. By the inductive hypothesis, $\mu(j) > |a_i|$. Now since $w[j{+}|a_i|] = \$_i$, it is easy to see that $\mu(j) > |a_{i+1}|$, i.e., $\mu(j)$ is right external.
\end{proof}

The main loop of Algorithm~1 starts with the position $|\#a_{i+1}|{+}2 = |a_i\$_ia_i|{+}2$, i.e., with the position $|a_i|{+}2$ inside the first occurrence of $w_i$ in $w_{i+1} = a_i\$_iw_i\$_iw_i\$_ia_i$. By Lemma~\ref{ExampleLemma}, we process $w_i$ until the position $|a_ib_i|{+}1$ in $w_i$ that corresponds to the position $j = |\#a_i\$_ia_ib_i|{+}1$ in $w$ is reached. By Lemma~\ref{ExampleLemma}, we have $\mu(j) > |a_i|$. Hence, it is straightforward that $\mu(j) = |a_i\$_ia_ib_i|$, which is a period of the whole string $w_{i+1}$. Algorithm~1 calculates $\mu(j)$ and then skips some positions in the loop in lines~\ref{lst:skip1}--\ref{lst:skip2} until it reaches the position $j' = |\#a_i\$_iw_i\$_ia_i|{+}2$, all in $\Theta(|w_{i+1}|)$ time. The position $j'$ corresponds to the position $|a_i|{+}2$ inside the second occurrence of $w_i$ in $w_{i+1} = a_i\$_iw_i\$_iw_i\$_ia_i$. So, we have some kind of recursion here. Denote by $t_{i+1}$ the time required to process the substring $w_{i+1}$ of $w$; it follows from our discussion that $t_{i+1}$ can be expressed by the following recursive formula: $t_{i+1} = \Theta(|w_{i+1}|) + 2t_i$ (with $t_0 = 0$). For simplicity, assume that the constant under the $\Theta$ is~$1$, so, $t_{i+1} = |w_{i+1}| + 2t_i$.

To estimate $t_{i+1}$, we first solve the following recursions: $|a_{i+1}| = 2|a_i| + 1$, $|b_{i+1}| = 2|b_i| + 2|a_i| + 1$, $|w_i| = 2|a_i| + |b_i|$ (with $|a_0| = |b_0| = 1$). Obviously $|a_i| = 2^{i+1} - 1$. Then $|b_{i+1}| = 2^{i+2} - 1 + 2|b_i|$. By a simple substitution, one can show that $|b_i| = i2^{i+1} + 1$. So, we obtain $|w_i| = i2^{i+1} + 2^{i+2} - 1$ and therefore $t_i = i2^{i+1} + 2^{i+2} - 1 + 2t_{i-1}$. By a substitution, one can prove that $t_i = i^22^i + 5i2^i - 2^i + 1$: indeed, substituting $t_{i-1} = (i-1)^22^{i-1} + 5(i-1)2^{i-1} - 2^{i-1} + 1$, we obtain
$$
\begin{array}{l}
t_i = i2^{i+1} + 2^{i+2} - 1 + 2t_{i-1}\\
 = \uline{i2^{i+1}} + \uwave{2^{i+2}} - 1 + (\uuline{(i-1)^22^i} + \uline{5(i}\uwave{-1)2^i} - \uwave{2^i} + 2)\\
 = \uuline{i^22^i - \cancel{2i2^i} + \cancel{2^i}} + \uline{\cancel{i2^{i+1}}} + \uline{5i2^i} + \uwave{2^{i+2}} - \uwave{5\cdot 2^i} - \uwave{\cancel{2^i}} + 1\\
 = \uuline{i^22^i} + \uline{5i2^i} - \uwave{2^i} + 1\enspace.
\end{array}
$$
Finally, since $|w_{i+1}| = (i+1)2^{i+2} + 2^{i+3} - 1 = \Theta(i2^i)$ and $\log|w_{i+1}| = \Theta(i)$, we obtain $t_{i+1} = (i + 1)^22^{i+1} + 5(i+1)2^{i+1} - 2^{i+1} + 1 = \Theta(i^22^i) = \Theta(|w_{i+1}|\log |w_{i+1}|) = \Theta(|w|\log|w|)$.

\section{Linear Algorithm}\label{SectLinearAlg}

To overcome the issues addressed in the previous section, we introduce two auxiliary arrays $m[1..n]$ and $r[1..n]$ that are initially filled with zeros; their meaning is clarified by Lemma~\ref{Correct} below. In Algorithm~2 below we use the three-operand $\mathbf{for}$ loop like in the C language.
\begin{algorithm}
\begin{algorithmic}[1]
\State compute $k = \max\{l \colon w[1..l] = w[j..j{+}l{-}1]\text{ for some }j\in (1..p]\}$
\State $i \gets k + 2;$
\While{$\mathbf{true}$}
    \If{$m[i] = 0$}\label{lst:cond} \Comment{$m[i]$ is not computed}
        \State compute $\mu(i);$\label{lst:computelocper}
        \If{$\mu(i)$ is external}
            \State $i$ is the leftmost critical point; stop the algorithm;%
        \EndIf%
        \State $m[i] \gets \mu(i);$
        \State $r[i] \gets i + m[i];$
        \While{$w[r[i]{-}m[i]] = w[r[i]]$}\label{lst:copybeg}
            \State $r[i] \gets r[i] + 1;$\label{lst:extendr}
        \EndWhile
        \For{$(j\gets i - m[i];\; j < r[i]{-}m[i];\; j \gets j + 1)$}\label{lst:copy0}
            \If{$m[j] \ne 0 \mathrel{\mathbf{and}} j - m[j] \ge i - m[i] \mathrel{\mathbf{and}} r[j] + m[i] < r[i]$}\label{lst:copy1}%
                \State $m[j{+}m[i]] \gets m[j];$\label{lst:copy21}
                \State $r[j{+}m[i]] \gets r[j] + m[i];$\label{lst:copy22}
            \EndIf
        \EndFor\label{lst:copyend}%
    \EndIf
    \State $i \gets r[i] - m[i] + 1;$\label{lst:xskipShort}
\EndWhile
\end{algorithmic}
\caption{~}
\end{algorithm}

\begin{lemma}
If $m[i] \ne 0$ for some position $i$ during the execution of Algorithm~2, then $m[i] = \mu(i)$ and $r[i] = \max\{r \colon w[i..r{-}1]\text{ has the period }\mu(i)\}$.\label{Correct}
\end{lemma}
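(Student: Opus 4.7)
I would prove the lemma by induction on the chronological sequence of nonzero assignments to entries of $m$ made during the execution of Algorithm~2, simultaneously maintaining the claim about $r$. The base case is vacuous since $m$ starts as the all-zero array. For the inductive step there are only two places where $m$ can be written to: the \emph{direct} assignment $m[i] \gets \mu(i)$ in lines~\ref{lst:computelocper}--\ref{lst:extendr}, and the \emph{copy} assignments $m[j{+}m[i]] \gets m[j]$ and $r[j{+}m[i]] \gets r[j]+m[i]$ inside the for-loop (lines~\ref{lst:copy0}--\ref{lst:copyend}). The direct case is essentially by construction: by Lemma~\ref{LocPerCompute} the subroutine returns exactly $\mu(i)$, and the initialization $r[i] \gets i + m[i]$ followed by the while loop in lines~\ref{lst:copybeg}--\ref{lst:extendr} extends $r[i]$ one character at a time precisely as long as period $m[i]$ survives, so on exit $r[i]$ is the required maximum.

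For the copy case, let $\delta = m[i]$ and $j' = j+\delta$. The direct-case analysis of $i$ gives that $w[i..r[i]{-}1]$ has period $\delta$, and combining with the defining equality $w[i-\delta..i-1] = w[i..i+\delta-1]$ of $\mu(i) = \delta$ yields that the larger substring $w[i-\delta..r[i]{-}1]$ has period $\delta$. By the inductive hypothesis applied to the earlier-set position $j$, we have $m[j] = \mu(j)$ and $w[j..r[j]{-}1]$ is the maximal rightward extension of period $m[j]$. The plan is then to use the three guard conditions $m[j]\ne 0$, $j - m[j] \ge i - \delta$, $r[j] + \delta < r[i]$ (together with $r[j] \ge j + m[j]$ coming from the direct case) to check that the two intervals $[j-m[j], j+m[j]-1]$ and $[j, r[j]-1]$, as well as their $\delta$-shifts $[j'-m[j], j'+m[j]-1]$ and $[j', r[j]+\delta-1]$, all lie entirely inside $[i-\delta, r[i]-1]$. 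The identity $w[x] = w[x+\delta]$ on that window then transfers the two known periodicities at $j$ to $j'$, giving the easy direction $\mu(j') \le m[j]$ and $r[j'] \ge r[j]+\delta$. For the matching bounds, I would assume a shorter local period $\mu(j') < m[j]$, respectively a strictly larger $r[j'] > r[j]+\delta$, observe that the associated witness again lies in the period-$\delta$ window, and shift it back by $-\delta$ to contradict either $\mu(j) = m[j]$ or the maximality of $r[j]$.

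The main obstacle is really the index bookkeeping in the last step: one must show that every index touched by a $\pm\delta$ shift stays inside the window $[i-\delta, r[i]-1]$ on which the identity $w[x] = w[x+\delta]$ is known. The three guard conditions in line~\ref{lst:copy1} are calibrated exactly for this, so once the ranges are pinned down the proof reduces to a short direct application of periodicity, with no additional combinatorial lemmas beyond Lemma~\ref{LocPerCompute} required.
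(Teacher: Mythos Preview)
Your proposal is correct and follows essentially the same approach as the paper: both argue by induction on assignments, handle the direct case by construction, and in the copy case use the guards in line~\ref{lst:copy1} to ensure that the relevant neighborhood of $j$ and its $m[i]$-shift lie inside the period-$m[i]$ block $w[i{-}m[i]..r[i]{-}1]$, so that the local-period and $r$ data transfer verbatim. The only difference is presentational: the paper compresses your four inequality checks into the single observation that the strings $w[j{-}m[j]..r[j]]$ and $w[j{-}m[j]{+}m[i]..r[j]{+}m[i]]$ are equal (note the inclusion of position $r[j]$, which simultaneously pins down both $\mu$ and $r$), whereas you verify each direction separately.
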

\begin{proof}
For each position $j$, denote $r_j = \max\{r \colon w[j..r{-}1]\text{ has the period }\mu(j)\}$. It suffices to show that the assignments in lines~\ref{lst:copy21}--\ref{lst:copy22} always assign $\mu(j{+}m[i])$ to $m[j{+}m[i]]$ and $r_{j{+}m[i]}$ to $r[j{+}m[i]]$. Suppose Algorithm~2 performs line~\ref{lst:copy21} for some $j$. Evidently, the string $w[i{-}m[i]..r[i]{-}1]$ has the period $m[i]$ (see Figure~\ref{fig:copyj}). Further, by the condition in line~\ref{lst:copy1}, the strings $w[j{-}m[j]..r[j]]$ and $w[j{-}m[j]{+}m[i]..r[j]{+}m[i]]$ are substrings of $w[i{-}m[i]..r[i]{-}1]$ and therefore they are equal. Hence, we have $\mu(j) = \mu(j{+}m[i])$ and $r_j + m[i] = r_{j{+}m[i]}$ provided $\mu(j) = m[j]$ and $r_j = r[j]$. Now one can prove the desired claim by a simple induction.
\begin{figure}[htb]
\center
\includegraphics[scale=0.55]{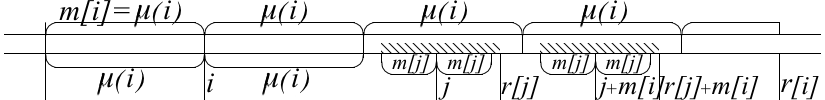}
\caption{$j - m[j] \ge i - m[i]$ and $r[j] + m[i] < r[i]$.}
\label{fig:copyj}
\end{figure}
\end{proof}
By Lemma~\ref{Correct}, the assignment in line~\ref{lst:xskipShort} skips exactly the same set of positions as the loop in lines~\ref{lst:skip0}--\ref{lst:skip2} in Algorithm~1. Thus, Lemma~\ref{Correct} implies that the values $m[i] = \mu(i)$ computed by Algorithm~2 coincide with the same values computed by Algorithm~1 and hence are correct. However, now we do not compute some local periods but copy them from the array $m$ instead. It turns out that this is crucial for the time analysis.

As above, let $S$ be the sequence of all positions that Algorithm~2 does not skip in line~\ref{lst:xskipShort}. Again, we exclude from $S$ all positions $i$ such that $\mu(i) = 1$. Evidently, the resulting sequence is exactly the same as the sequence $S$ in Section~\ref{SectAnalysis} but, in contrast to Algorithm~1, the new algorithm copies local periods at some positions of $S$ from the array $m$ rather than calculates them explicitly. Denote by $\hat{S}$ the subsequence of all positions of $S$ for which Algorithm~2 computes local periods explicitly in line~\ref{lst:computelocper}.

Due to the assignment in line~\ref{lst:xskipShort}, obviously, the loop in lines~\ref{lst:copybeg}--\ref{lst:extendr} performs at most $n$ iterations in total.
The loop in lines~\ref{lst:copy0}--\ref{lst:copyend} performs exactly the same number of iterations as the loop in lines~\ref{lst:copybeg}--\ref{lst:extendr} plus $\mu(i)$ iterations for an appropriate $i \in \hat{S}$. Hence, the running time of the whole algorithm is $O(n + \sum_{i\in \hat{S}}\mu(i))$. Thus, to prove that Algorithm~2 is linear, it suffices to show that $\sum_{i\in \hat{S}} \mu(i) = O(n)$.

Fix an arbitrary number $q$. Denote by $T(q)$ the maximal sum $\sum_{i\in S' \cap \hat{S}} \mu(i)$ among all contiguous subsequences $S'$ of $S$ such that $\mu(i) \le q$ for each $i\in S'$ (note that we sum only through the positions of $\hat{S}$). We are to show that $T(q) = O(q)$, which immediately implies $\sum_{i \in \hat{S}} \mu(i) = O(n)$ since the number $q$ is arbitrary and $T(n) = \sum_{i\in \hat{S}} \mu(i)$.

We need one additional combinatorial fact.
\begin{lemma}
Let $i$ be a position of $w$ with internal local period $\mu(i) > 1$. Suppose $j$ is a position from $(i..i{+}\mu(i))$ such that $\mu(j') < \mu(i)$ for each $j' \in (i..j]$; then $w[j{-}\mu(j)..j{+}\mu(j){-}1]$ is a substring of $w[i{-}\mu(i)..i{+}\mu(i){-}1]$.
\label{NotRight}
\end{lemma}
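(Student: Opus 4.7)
The left containment $j - \mu(j) \geq i - \mu(i)$ is immediate: since $j > i$ and $\mu(j) < \mu(i)$, we have $\mu(j) < \mu(i) + (j-i)$, so $j - \mu(j) > i - \mu(i)$. The substance of the lemma is thus the right containment $j + \mu(j) \leq i + \mu(i)$, which I will prove by strong induction on $j$: suppose for contradiction that $j + \mu(j) > i + \mu(i)$, and assume $j$ is the leftmost such position in $(i..i+\mu(i))$, so that for every $j' \in (i..j)$ we have $j' + \mu(j') \leq i + \mu(i)$.

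If $j - \mu(j) < i$, Lemma~\ref{LeftExt} applies (since $\mu(j) \neq \mu(i)$), forcing $\mu(j) < \frac{1}{2}\mu(i)$; combined with $\mu(j) > j - i$ and the contradiction hypothesis $\mu(j) > i + \mu(i) - j$, this yields $2\mu(j) > \mu(i)$, a contradiction. Otherwise $j - \mu(j) \geq i$. Setting $p = \mu(j)$ and applying the symmetric version of Lemma~\ref{Break} to $j$, I obtain $j'' \in (j-p..j)$ with $\mu(j'') = p$ or $\mu(j'') \geq 2p$, and necessarily $j'' \in (i..j)$. If $\mu(j'') \geq 2p$, the inductive hypothesis gives $\mu(j'') \leq i + \mu(i) - j''$, and combining with $j'' \geq j - p + 1$ yields $2p \leq i + \mu(i) - j + p - 1$, i.e., $j + p \leq i + \mu(i) - 1$, contradicting the assumption $j + p > i + \mu(i)$.

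This leaves the case $\mu(j'') = p$: the squares at $j$ and $j''$ both have period $p$ and overlap in at least one position, so their union $w[j''-p..j+p-1]$ has period $p$. If $j'' - p < i$, this region covers $v = w[i..i+\mu(i)-1]$, which by Lemma~\ref{Unbordered} is unbordered and therefore has minimal period $\mu(i) > p$---a contradiction. Otherwise $j'' - p \geq i$, and I iterate, applying the symmetric Lemma~\ref{Break} to the current leftmost anchor with local period $p$ in order to extend the period-$p$ region further leftward. At each step the symmetric Lemma~\ref{Break} returns either (i) another position with $\mu = p$, which extends the region via overlapping squares, or (ii) a position $j'''$ with $\mu(j''') \geq 2p$, in which case the right half of the square at $j'''$ lies inside the period-$p$ region (by the inductive hypothesis) and hence inherits period $p$, and because the square's own period $\mu(j''')$ makes its left half equal to its right half, the left half also has period $p$, again extending the region by at least $p+1$ positions leftward. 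In both subcases the left boundary strictly decreases, so after finitely many iterations it drops below $i$ and we obtain the same unborderedness contradiction. The main technical point is the iterated $\mu(j''') \geq 2p$ subcase: the arithmetic contradiction of the initial step no longer applies, and one must carefully combine the period structures of the large square at $j'''$ with the already-built period-$p$ region to ensure the extension remains consistent.
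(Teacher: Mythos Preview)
Your proof has a genuine gap in the iterated subcase~(ii). When you obtain $j''' \in (j''-p..j'')$ with $\mu(j''') \ge 2p$, you argue that the right half $w[j'''..j'''{+}\mu(j'''){-}1]$ of the square at $j'''$ lies inside the current period-$p$ region $w[\ell..r]$ (where $\ell = j''-p$) and hence has period $p$, and that the left half, being equal to the right half, also has period $p$. That much is correct, but it does \emph{not} follow that the extended region $w[j'''{-}\mu(j''')..r]$ has period $p$. Two strings with period $p$ that overlap on fewer than $p$ positions need not glue into a period-$p$ string (e.g.\ $abab$ and $baba$ both have period $2$, yet $ababbaba$ does not). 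Here the left half $w[j'''{-}\mu(j''')..j'''{-}1]$ and the current region $w[\ell..r]$ overlap on exactly $j''' - \ell$ positions, and since $j''' \in (\ell..\ell{+}p)$ this overlap is strictly less than $p$. So the gluing step fails, and with it the eventual contradiction (that $w[i..i{+}\mu(i){-}1]$ acquires period $p < \mu(i)$). You acknowledge that this is the delicate point, but ``one must carefully combine the period structures'' is not an argument.

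The paper's proof sidesteps all of this. It works inside the unbordered string $v = w[i..i{+}\mu(i){-}1]$, writes $\mu'(h)$ for the local period at $h$ relative to $v$, and observes that $\mu'(h) \le \mu(h) < \mu(i)$ for every $h \in (i..j]$ by hypothesis, while for every $h \in (j..i{+}\mu(i))$ one has $\mu'(h) \le \mu'(j) \le \mu(j) < \mu(i)$ because the assumption $j + \mu(j) > i + \mu(i)$ forces $\mu'(j)$ to be right external in $v$. Thus no position of $v$ is critical, contradicting Theorem~\ref{CritFact}. This one-paragraph argument needs no induction, no symmetric variants of Lemmas~\ref{LeftExt} or~\ref{Break}, and no assumption that $\mu(j)$ is internal---another point your proof glosses over when invoking the symmetric Lemma~\ref{Break} at the very first step.
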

\begin{proof}
Assume, by a contradiction, that $j + \mu(j) > i + \mu(i)$. For each $h \in [i..i{+}\mu(i))$, denote by $\mu'(h)$ the local period at the position $h$ with respect to the substring $w[i..i{+}\mu(i){-}1]$. Clearly $\mu'(h) \le \mu(h)$. By Lemma~\ref{Unbordered}, $w[i..i{+}\mu(i){-}1]$ is unbordered and hence its minimal period is $\mu(i)$. By Theorem~\ref{CritFact}, there is $h \in [i..i{+}\mu(i))$ such that $\mu'(h) = \mu(i)$. But for each $h \in [i..j]$, we have $\mu'(h) < \mu(i)$ and moreover, for each $h \in (j..i{+}\mu(i))$, $\mu'(h) \le \mu(j) < \mu(i)$ because the local period $\mu'(j)$ is right external with respect to $w[i..i{+}\mu(i){-}1]$, a contradiction.
\end{proof}

Choose a contiguous subsequence $S' = \{i_1, i_2, \ldots, i_z\}$ of $S$ such that $\mu(i_j) \le q$ for each $j \in [1..z]$ and $\sum_{i \in S'\cap\hat{S}} \mu(i) = T(q)$. As above, we associate with each $i_j$ the values $c_j$ and $r_j$ defined in Section~\ref{SectAnalysis}. By an inductive process described in Section~\ref{SectAnalysis}, we construct a subsequence $\{i_{s_j}\}_{j=1}^t$ of $S'$. The following result complements Lemma~\ref{MainLemma}.
\begin{lemma}
Let $h \in (s_t..z]$ and $\mu(i_h) = \mu(i_{s_{t-1}})$. If $\frac{7}3 \mu(i_{s_{t-1}}) > \mu(i_{s_t})$, then for each $h' \in (h..z]$, we have $i_{h'} \notin \hat{S}$.\label{LastLemma}
\end{lemma}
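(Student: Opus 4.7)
The plan is to leverage the copying performed in lines~\ref{lst:copy21}--\ref{lst:copy22} during the processing of $i_{s_t}$ and to show that this has already filled in $m[i_{h'}]$ for every $h' \in (h..z]$. I adopt the notation of Lemma~\ref{MainLemma}: set $|a| = \mu(i_{s_{t-1}})$, $|aab| = \mu(i_{s_t})$, so that $|b| < |a|/3$, the factor $w[i_{s_t} - |aab|..r_{s_t} - 1]$ has period $|aab|$, and $w[i_{s_t}..i_{s_t} + |aab| - 1]$ equals the unbordered word $aab$.

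First I would pin down $i_h$ inside this periodic factor. By Lemma~\ref{MainLemma}, every position $i' \in (c_{s_t}..i_h)$ has local period strictly less than $\frac{2}{3}|a|$, so $i_h$ is genuinely the leftmost position to the right of $i_{s_t}$ whose local period equals~$|a|$. A short argument using the unborderedness of $a$ and the primitivity of $aab$ shows that inside the $|aab|$-periodic factor the only positions with local period $|a|$ are those whose residue modulo $|aab|$ equals that of $c_{s_{t-1}}$; hence $i_h = c_{s_{t-1}} + |aab|$.

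For each $h' \in (h..z]$ I then set $j := i_{h'} - |aab|$ and verify that the assignment in lines~\ref{lst:copy21}--\ref{lst:copy22} is executed with this source $j$ when $i_{s_t}$ is processed. Applying Lemma~\ref{NotRight} at $i_{s_t}$---legitimately, because each $i_{h^*}$ with $h^* \in (s_t..h']$ satisfies $\mu(i_{h^*}) < |aab|$ (either $\mu(i_{h^*}) = |a|$ or $\mu(i_{h^*}) < \frac{2}{3}|a|$ by Lemma~\ref{MainLemma})---shows that the $\mu(i_{h'})$-centered square at $i_{h'}$ lies entirely inside the $|aab|$-periodic factor, so $i_{h'} - \mu(i_{h'}) \ge i_{s_t}$ and $i_{h'} + \mu(i_{h'}) \le r_{s_t}$. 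The $|aab|$-periodicity then yields $\mu(j) = \mu(i_{h'})$ and $r[j] = r[i_{h'}] - |aab|$, which reduces the two tests in line~\ref{lst:copy1} to the two inequalities just derived, both of which hold with strict slack.

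The main obstacle is the auxiliary claim that $m[j] \ne 0$ already holds at the moment $i_{s_t}$ is processed. I would prove this by induction on the order in which positions of $\hat{S}$ are processed, restricted to the interval $[i_{s_t} - |aab|, c_{s_t})$: each position in that interval either belongs to $\hat{S}$ (so its $m$-value is set explicitly in line~\ref{lst:computelocper}) or inherits its value from an earlier copy, most importantly from the copy performed while processing $i_{s_{t-1}}$, which shifts by $|a|$ and whose window covers the mirrors of both $i_h$ and the intermediate positions produced by Lemma~\ref{MainLemma}. Reconciling the two shift lengths $|a|$ and $|aab|$ under the constraints $|aab| = 2|a| + |b|$ and $|b| < |a|/3$ is where the bookkeeping is most delicate, but the unborderedness of $a$ and of $aab$ makes the matching of mirror positions unambiguous, completing the proof that $i_{h'} \notin \hat{S}$.
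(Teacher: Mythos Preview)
Your overall plan---show that the copying loop executed while processing $i_{s_t}$ has already filled $m[i_{h'}]$ for each $h'\in(h..z]$---is exactly the paper's strategy, but several of your concrete claims are incorrect and would make the argument collapse.

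\textbf{The location of $i_h$.} Your assertion that ``the only positions with local period $|a|$ are those whose residue modulo $|aab|$ equals that of $c_{s_{t-1}}$,'' and hence $i_h=c_{s_{t-1}}+|aab|$, is false on both counts. Already $i_{s_{t-1}}$ and $c_{s_{t-1}}$ are (in general) distinct positions with local period $|a|$, so the residue is not unique. More importantly, $i_h$ lies in $(c_{s_t},c_{s_t}+|aab|)$, and the paper explicitly remarks that the distance between $i_{s_t}$ and $c_{s_t}$ can be arbitrarily large; thus $i_h$ may sit many $|aab|$-periods to the right of $c_{s_{t-1}}$. What the paper actually proves is that $i_h$ lies in the \emph{left half of some occurrence of $aa$} in the $|aab|$-periodic factor, and that $c_h-c_{s_{t-1}}$ is a (possibly large) multiple of $|aab|$. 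The shift used in the copying argument is this $\delta=c_h-c_{s_{t-1}}$, and one relies on the fact that the loop in lines~\ref{lst:copy0}--\ref{lst:copyend} cascades (setting $m[j+|aab|]$ makes it eligible as a source later in the same loop), so a shift by any multiple of $|aab|$ is realised.

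\textbf{The use of Lemma~\ref{NotRight}.} You apply Lemma~\ref{NotRight} at $i_{s_t}$, but its hypothesis requires $i_{h'}\in(i_{s_t},i_{s_t}+\mu(i_{s_t}))$, whereas $i_{h'}\in(c_{s_t},c_{s_t}+\mu(c_{s_t}))$; when $c_{s_t}>i_{s_t}$ these intervals differ and the lemma does not apply. The paper instead applies Lemma~\ref{NotRight} at $c_h$ (where $\mu(c_h)=|a|$), obtaining the much sharper conclusion that $w[i_{h'}-\mu(i_{h'})..i_{h'}+\mu(i_{h'})-1]$ is a substring of $w[c_h-|a|..c_h+|a|-1]$. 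This is what guarantees that the source positions $i_{h'}-\delta$ coincide with positions the algorithm actually visited right after $c_{s_{t-1}}$, so their $m$-values are nonzero.

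\textbf{A missing case.} You implicitly assume the copying loop runs at $i_{s_t}$, i.e.\ that $i_{s_t}\in\hat S$. The paper treats the case $i_{s_t}\notin\hat S$ separately: then $m[i_{s_t}]$ was itself copied from some earlier $i'$, and an induction shows the required $m$-values were propagated when that earlier copy was performed. Your sketch does not address this.
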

\begin{proof}
We are to show that, informally, Algorithm~2 processes the position $i_h$ in the same manner as it processed $i_{s_{t-1}}$ and the loop in lines~\ref{lst:copy0}--\ref{lst:copyend} copies all required local periods $\mu(i_{h'})$ for $h' \in (h..z]$ to the array $m$ immediately after the computation of $r[i_{s_t}]$. (Thus $i_{h'} \notin \hat{S}$ for $h' \in (h..z]$.)

Denote $a = w[c_{s_{t-1}}..c_{s_{t-1}}{+}\mu(c_{s_{t-1}}){-}1]$ and $b = w[c_{s_{t-1}}{+}\mu(c_{s_{t-1}}) .. c_{s_{t-1}}{-}\mu(c_{s_{t-1}}){+}\mu(i_{s_t}){-}1]$ (see Figure~\ref{fig:xstring}). Note that $\mu(c_{s_{t-1}}) = \mu(i_{s_{t-1}}) = |a|$ and $\mu(c_{s_t}) = \mu(i_{s_{t}}) = |aab|$. Since $\frac{7}3|a| = \frac{7}3 \mu(i_{s_{t-1}}) > \mu(i_{s_t}) = |aab|$, we have $|b| < \frac{1}3|a|$. By Lemma~\ref{Unbordered}, the string $a$ is unbordered. Denote $x = w[i_{s_t}{-}|aab|..c_{s_t}{+}|aab|{-}1]$ (see Figure~\ref{fig:xstring}). Clearly, $x$ is a substring of the infinite string $aab\cdot aab\cdot aab\cdots$ and the length of $x$ is at least $2|aab|$ (recall that $c_{s_t}$ can coincide with $i_{s_t}$). Notice that the distance between $i_{s_t}$ and $c_{s_t}$ can be arbitrarily large.
\begin{figure}[htb]
\center
\includegraphics[scale=0.55]{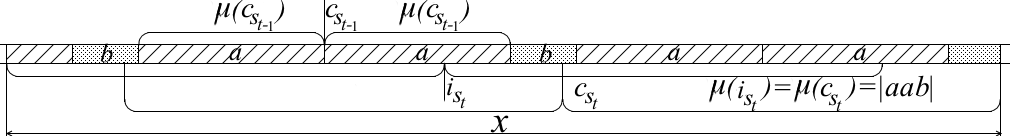}
\caption{The internal structure of the string $x$ from the proof of Lemma~\ref{LastLemma}.}
\label{fig:xstring}
\end{figure}

Without loss of generality, assume that $i_h$ is equal to the leftmost position $i > c_{s_t}$ such that $\mu(i) = \mu(i_{s_{t-1}}) = |a|$. (Since $\{i_1, \ldots, i_z\}$ is a contiguous subsequence of $S$, $i$ is certainly equal to $i_h$ for some $h \in (s_t..z]$.) Obviously $i_h \in (c_{s_t}..c_{s_t}{+}|aab|)$. It follows from the definition of $i_h$ and from Lemma~\ref{MainLemma} that for each $i \in (c_{s_t}..i_h)$, we have $\mu(i) < \frac{2}3|a|$. So, Lemma~\ref{LeftExtBound} implies that $i_h - \mu(i_h) = i_h - |a| < c_{s_t}$. Since by Lemma~\ref{Unbordered} the string $w[c_{s_t}..c_{s_t}{+}|aab|{-}1]$ is unbordered and thus cannot have the period $|a| < |aab|$, we obtain $r_h < c_{s_t} + |aab|$. Thus, the string $w[i_h{-}|a|..r_h]$ is a substring of $x$ (see Figure~\ref{fig:xstring2}). Now we must specify where the position $i_h$ can occur in $x$.
\begin{figure}[htb]
\center
\includegraphics[scale=0.55]{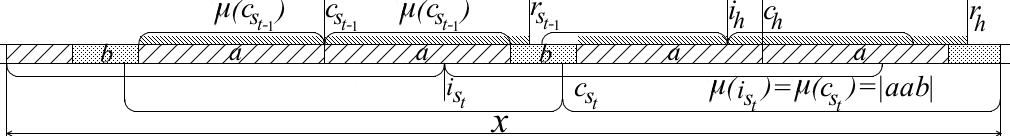}
\caption{A location of $i_h$, $c_h$, and $r_h$ inside $x$ from the proof of Lemma~\ref{LastLemma}.}
\label{fig:xstring2}
\end{figure}

By Lemma~\ref{cjump}, for any $i \in (c_{s_{t-1}}..c_{s_{t-1}}{+}|a|)$, we have $\mu(i) \ne |a|$. Hence $i_h \notin (c_{s_{t-1}}..c_{s_{t-1}}{+}|a|)$. Moreover, since $x$ is a substring of the infinite string $aab\cdot aab\cdot aab\cdots$ and $w[i_h{-}|a|..i_h{+}|a|{-}1]$ is a substring of $x$, in the same way one can prove that $i_h$ does not lie in the segments $(c_{s_{t-1}}{+}|aba|..c_{s_{t-1}}{+}|abaa|)$, $(c_{s_{t-1}}{+}|abaaba|..c_{s_{t-1}}{+}|abaabaa|), \ldots$ (see Figure~\ref{fig:xstring2}), i.e., informally, $i_h$ cannot lie in the right half of an occurrence of $aa$ in $x$.

Suppose $i_h \in [c_{s_{t-1}}{+}|a|..c_{s_{t-1}}{+}|ab|)$. Then, the string $w[i_h{-}|a|..c_{s_{t-1}}{+}|a|]$, which is a suffix of $a$, has an occurrence $v = w[i_h..c_{s_{t-1}}{+}|aa|]$ (see Figure~\ref{fig:centerc}a with $i = i_h$). Since $\mu(i_h) = |a| > |b|$, $v$ overlaps $w[c_{s_{t-1}}{+}|ab|..c_{s_{t-1}}{+}|aba|{-}1] = a$. Thus, $a$ has a nontrivial border, a contradiction. By the same argument, one can show that $i_h$ does not lies in the segments $[c_{s_{t-1}}{+}|abaa|..c_{s_{t-1}}{+}|abaab|)$, $[c_{s_{t-1}}{+}|abaabaa|..c_{s_{t-1}}{+}|abaabaab|), \ldots$; in other words, $i_h$ cannot lie in an occurrence of $b$ in $x$.

We have proved that $i_h$ lies in the left half of an occurrence of $aa$ in $x$, precisely, in one of the segments $[c_{s_{t-1}}{+}|ab|..c_{s_{t-1}}{+}|aba|]$, $[c_{s_{t-1}}{+}|abaab|..c_{s_{t-1}}{+}|abaaba|], \ldots$. Figure~\ref{fig:xstring2} illustrates the case $i_h \in [c_{s_{t-1}}{+}|ab|..c_{s_{t-1}}{+}|aba|]$; all other cases are similar. First, we show that $c_h$ is equal to $c_{s_{t-1}} + |aba|$, i.e., $c_h$ is the center of an occurrence of $aa$ in $x$ (see Figure~\ref{fig:xstring2}). Obviously, the string $w[i_h{-}|a|..c_{s_{t-1}}{+}|abaa|{-}1]$ has the period $|a|$ and therefore $c_{s_{t-1}} + |abaa| \le r_h$. The strings $w[c_{s_{t-1}}{+}|ab|..r_h{-}1]$ and $w[c_{s_{t-1}}{-}|a|..r_{s_{t-1}}{-}1]$ are similar: they both have the period $|a|$, and $w[r_h] \ne w[r_h{-}|a|]$ and $w[r_{s_{t-1}}] \ne w[r_{s_{t-1}}{-}|a|]$. Note that the starting positions of these strings differ by $|aab|$. Furthermore, since $r_h < c_{s_t} + |aab|$, the strings $w[c_{s_{t-1}}{+}|ab|..r_h]$ and $w[c_{s_{t-1}}{-}|a|..r_{s_{t-1}}]$ both are substrings of $x$ and hence they are equal because $x$ has the period $|aab|$. Now since $w[c_{s_{t-1}}{+}|ab|..r_h]$ is a suffix of $w[i_h{-}|a|..r_h]$, it is straightforward that $c_h = c_{s_{t-1}} + |aba|$.

To finish the proof, it suffices to show that Algorithm~2 does not compute explicitly the local periods at the positions $i_{h+1}, i_{h+2}, \ldots,i_z$ but obtains those local periods from the array $m$. For this purpose, let us first prove that for each $h' \in (h..z]$, the string $w[i_{h'}{-}\mu(i_{h'})..i_{h'}{+}\mu(i_{h'}){-}1]$ is a substring of $w[c_h{-}|a|..c_h{+}|a|{-}1]$. This fact implies that, in a sense, after the processing of the position $c_h$ Algorithm~2 is in a situation that locally resembles the situation in which the algorithm was after the processing of the position $c_{s_{t-1}}$ (see Figure~\ref{fig:clarification}), i.e., Algorithm~2 examines exactly the same positions $i_{h+1}, i_{h+2}, \ldots, i_z$ shifted by $\delta = c_h - c_{s_{t-1}}$ or, more formally, $i_{s_{t-1}{+}1} = i_{h+1} - \delta, i_{s_{t-1}{+}2} = i_{h+2} - \delta, \ldots, i_{s_{t-1}{+}z{-}h} = i_z - \delta$.
\begin{figure}[htb]
\center
\includegraphics[scale=0.55]{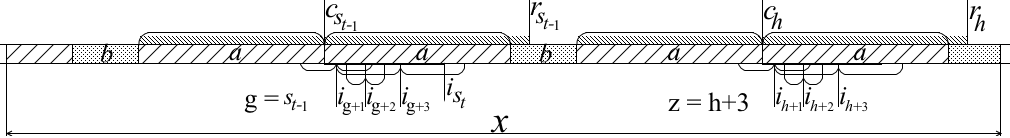}
\caption{Local similarities between $c_{s_{t-1}}$ and $c_h$ in the proof of Lemma~\ref{LastLemma}; for brevity, denote $g = s_{t-1}$. Here $z = h + 3$.}
\label{fig:clarification}
\end{figure}

Let $i$ be the leftmost position from $(c_h..c_h{+}|a|)$ such that $\mu(i) \ge \mu(c_h)$. Lemmas~\ref{Break} and~\ref{cjump} imply that such position always exists and $\mu(i) \ge 2\mu(c_h) = |aa|$. Since $i \in (c_{s_t}..c_{s_t}{+}\mu(c_{s_t}))$ and $|aa| > \frac{1}2|aab| = \frac{1}2\mu(c_{s_t})$, it follows from Lemmas~\ref{LeftExt} and~\ref{cjump} that $\mu(i) \ge 2\mu(c_{s_t})$. Hence, by the definition of the subsequence $\{i_{s_j}\}_{j=1}^t$, we have $i > i_z$. Thus, for each $h' \in (h..z]$, we have $\mu(i_{h'}) < \mu(c_h)$ and $i_{h'} \in (c_h..i)$. Therefore, by Lemma~\ref{NotRight}, the string $w[i_{h'}{-}\mu(i_{h'})..i_{h'}{+}\mu(i_{h'}){-}1]$ is a substring of $w[c_h{-}|a|..c_h{+}|a|{-}1]$.

Suppose $i_{s_t} \in \hat{S}$. Summing up the established facts, we obtain that since $\delta = c_h - c_{s_{t-1}}$ is a multiple of $\mu(i_{s_t}) = |aab|$, the loop in lines~\ref{lst:copy0}--\ref{lst:copyend} performed immediately after the computation of the local period at the position $i_{s_t}$ in line~\ref{lst:computelocper} copies $m[i_{h+1}{-}\delta], m[i_{h+2}{-}\delta], \ldots, m[i_z{-}\delta]$, which are certainly filled with nonzero values, to $m[i_{h+1}], m[i_{h+2}], \ldots, m[i_z]$, respectively. Thus, Algorithm~2 does not compute explicitly the local periods at the positions $i_{h+1}, i_{h+2}, \ldots, i_z$.

Suppose $i_{s_t} \notin \hat{S}$, i.e., $m[i_{s_t}]$ and $r[i_{s_t}]$ are nonzero at the time the algorithm reaches $i_{s_t}$. It follows from Algorithm~2 that the values $m[i_{s_t}]$ and $r[i_{s_t}]$ are obtained from values $m[i']$ and $r[i']$ for some position $i' < i_{s_t}$ such that $w[i'{-}m[i']..r[i']] = w[i_{s_t}{-}m[i_{s_t}]..r[i_{s_t}]]$. Suppose $i' \in \hat{S}$. Thus, when Algorithm~2 had calculated $\mu(i')$, it passed through the positions $i_{s_t{+}1}{-}\delta, i_{s_t{+}2}{-}\delta, \ldots, i_z{-}\delta$, where $\delta = i_{s_t} - i'$, stored the corresponding local periods in $m[i_{s_t{+}1}{-}\delta], m[i_{s_t{+}2}{-}\delta], \ldots, m[i_z{-}\delta]$, and then copied those values to $m[i_{s_t{+}1}], m[i_{s_t{+}2}], \ldots, m[i_z]$, respectively, when copied $m[i']$ to $m[i_{s_t}]$. Finally, suppose $i' \notin \hat{S}$. By an obvious induction, one can prove that in this case $m[i_{s_t{+}1}{-}\delta], m[i_{s_t{+}2}{-}\delta], \ldots, m[i_z{-}\delta]$ are also filled with correct values and thus the same argument shows that $m[i_{s_t{+}1}], m[i_{s_t{+}2}], \ldots, m[i_z]$ are eventually set to nonzero values.
\end{proof}

Suppose $t > 1$ and $\frac{7}3\mu(i_{s_{t-1}}) \le \mu(i_{s_t})$. As in Section~\ref{SectAnalysis}, $T(q)$ is determined by the recursion (\ref{eq:teq}). Let us estimate the sum of the terms from the parentheses of $T(\ldots)$ in the right hand side of (\ref{eq:teq}). Since $\mu(i_{s_{t-1}}) \le \frac{3}{7} \mu(i_{s_t})$, we have $\frac{1}2 \mu(i_{s_1}) + \cdots + \frac{1}2 \mu(i_{s_t}) \le \frac{3}{7}\mu(i_{s_t})(\frac{1}2 + \frac{1}{2^2} + \frac{1}{2^3}~+~\cdots) + \frac{1}2\mu(i_{s_t}) \le \frac{3}{7} q + \frac{1}2q = \frac{13}{14}q$.

Suppose $t > 1$, $\frac{7}3\mu(i_{s_{t-1}}) > \mu(i_{s_t})$. Let $h$ be the minimal number from $(s_t..z]$ such that $\mu(i_h) = \mu(i_{s_{t-1}})$ (if it does not exist, assume that $h = z$). By the definition of the subsequence $\{i_{s_j}\}_{j=1}^t$, we have $i_h \in (c_{s_t}..c_{s_t}{+}\mu(c_{s_t}))$. Lemma~\ref{MainLemma} implies that $\mu(i) < \frac{2}3\mu(i_{s_{t-1}}) \le \frac{1}3 \mu(i_{s_t})$ for each $i \in (c_{s_t}..i_h)$. Further, by Lemma~\ref{LastLemma}, we have $i_{h'} \notin \hat{S}$ for each $h' \in (h..z]$ and thus we can ignore these positions in our analysis. So, combining Lemmas~\ref{LeftExt},~\ref{LeftExtBound},~\ref{MainLemma},~\ref{LastLemma}, one can deduce the following recursion:

\begin{equation}
T(q) \le \sum_{j=1}^t \mu(i_{s_j}) + \mu(i_h) + T\left(\frac{1}2\mu(i_{s_1})\right) + \cdots + T\left(\frac{1}2\mu(i_{s_{t-1}})\right) + T\left(\frac{1}3\mu(i_{s_t})\right)\enspace.
\label{eq:main_recustion}
\end{equation}

Let us estimate the sum of the terms from the parentheses of $T(\ldots)$ in the right hand side of (\ref{eq:main_recustion}). Since $\sum_{j=1}^{t-1} \mu(i_{s_j}) \le q$, we have $\frac{1}2 \mu(i_{s_1}) + \cdots + \frac{1}2 \mu(i_{s_{t-1}}) + \frac{1}3 \mu(i_{s_t}) \le \frac{1}2q + \frac{1}3 q = \frac{5}6 q$. Clearly, the sum $\sum_{j=1}^t \mu(i_{s_j}) + \mu(i_h)$ is bounded by~$3q$.

Finally, in the case $t = 1$ we have, by Lemmas~\ref{LeftExt} and~\ref{LeftExtBound}, $T(q) \le \mu(i_{s_1}) + T(\frac{1}2\mu(i_{s_1}))$. Obviously, $\frac{1}2\mu(i_{s_1})$, the term from the parentheses of $T(\ldots)$, is less than or equal to $\frac{1}2q$.

Putting everything together, it is easy to see that $T(q)$ is determined by the recursion $T(q) \le 3q + \sum_{j=1}^{r} T(q_j)$ for some terms $\{q_j\}_{j=1}^r$ such that $\sum_{j=1}^r q_j \le \alpha q$, where $\alpha = \min\{\frac{13}{14}, \frac{5}6, \frac{1}2\} < 1$. It is well known that such recursion has the solution $T(q) \le 3q + \alpha 3q + \alpha^2 3q~+~\cdots = \frac{3q}{1 - \alpha} = O(q)$. Thus, the above analysis of Algorithm 2 proves the following theorem.

\begin{theorem}
There is a linear time and space algorithm finding the leftmost critical point of a given string on an arbitrary unordered alphabet.
\end{theorem}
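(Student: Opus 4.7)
The plan is to prove the theorem by analyzing Algorithm~2 in two parts: correctness and complexity. For correctness I would first invoke Lemma~\ref{Correct}, which guarantees that whenever $m[i]\ne 0$ during the execution, the stored value equals $\mu(i)$ and $r[i]$ equals the maximal right endpoint at which the period $\mu(i)$ persists. Consequently, the skip assignment $i\gets r[i]-m[i]+1$ in line~\ref{lst:xskipShort} leaps over exactly the positions that were skipped by the inner while-loop of Algorithm~1. Since Algorithm~1's correctness is an immediate consequence of Lemma~\ref{LeftExtCrit}, Algorithm~2 also halts at the leftmost critical point.

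For the running time, the bookkeeping outside line~\ref{lst:computelocper} contributes $O(n)$ in total: the while-loop in lines~\ref{lst:copybeg}--\ref{lst:extendr} performs at most $n$ iterations overall because $r[\cdot]$ advances monotonically along the input, and the for-loop in lines~\ref{lst:copy0}--\ref{lst:copyend} is charged against that work plus one extra $\mu(i)$ per $i\in\hat S$. Thus it suffices to show $\sum_{i\in\hat S}\mu(i)=O(n)$, which reduces to proving $T(q)=O(q)$ uniformly in $q$ and then specializing to $q=n$.

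To bound $T(q)$ I would form the subsequence $\{i_{s_j}\}_{j=1}^t$ exactly as in Section~\ref{SectAnalysis} and distinguish three cases. The trivial case $t=1$ yields $T(q)\le\mu(i_{s_1})+T(\tfrac12\mu(i_{s_1}))$ via Lemmas~\ref{LeftExt} and~\ref{LeftExtBound}. When $t>1$ and $\tfrac73\mu(i_{s_{t-1}})\le\mu(i_{s_t})$ the geometric growth of the $\mu(i_{s_j})$'s forces the sum of recursive arguments in~(\ref{eq:teq}) to be at most $\tfrac{13}{14}q$, so that recursion suffices without modification. The delicate case is $t>1$ with $\tfrac73\mu(i_{s_{t-1}})>\mu(i_{s_t})$: Lemma~\ref{MainLemma} alone does not suffice because a position $i_h$ with $\mu(i_h)=\mu(i_{s_{t-1}})$ may reappear to the right of $i_{s_t}$. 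Here Lemma~\ref{LastLemma} is the essential ingredient, certifying that every later position $i_{h'}$ lies outside $\hat S$, so those terms drop from the sum; what remains is precisely recursion~(\ref{eq:main_recustion}), whose recursive arguments total at most $\tfrac56 q$.

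In all three cases the recursion fits the template $T(q)\le 3q+\sum_{j}T(q_j)$ with $\sum_j q_j\le\alpha q$ for $\alpha=\max\{\tfrac12,\tfrac{13}{14},\tfrac56\}<1$ and each $q_j\le\tfrac12 q$, so unrolling geometrically gives $T(q)=O(q)$. Space linearity is immediate since the algorithm stores only $w$, the two arrays $m,r$ of length $n$, and the $O(\mu(i))$-size partial-match workspace from Lemma~\ref{LocPerCompute}, which is released between calls. I expect the main obstacle to be the shadow case handled by Lemma~\ref{LastLemma}: even when $i_{s_t}\notin\hat S$ one has to argue by induction along the chain of prior copy operations that every $m[i_{h'}]$, $h'\in(h..z]$, becomes nonzero in time for the skip to bypass it, which is the technical heart of the whole analysis.
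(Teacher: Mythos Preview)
Your proposal is correct and follows the paper's own argument essentially step for step: the same correctness reduction to Algorithm~1 via Lemma~\ref{Correct}, the same accounting that reduces the runtime to $\sum_{i\in\hat S}\mu(i)$, and the same three-case analysis of $T(q)$ built on Lemmas~\ref{LeftExt}, \ref{LeftExtBound}, \ref{MainLemma}, and~\ref{LastLemma}, culminating in the geometric recursion $T(q)\le 3q+\sum_j T(q_j)$ with contraction factor strictly below~$1$. Your choice of $\alpha=\max\{\tfrac12,\tfrac{13}{14},\tfrac56\}$ is in fact the right one (the paper's ``$\min$'' is a slip), and your closing remark correctly pinpoints the inductive chain-of-copies argument inside Lemma~\ref{LastLemma} as the place where the real work lies.
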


\section{Conclusion}\label{SectConclusion}

We have shown that the problems of the computation of a critical factorization on unordered and ordered alphabets both have linear time solutions. This is in contrast with the seemingly related problem of finding repetitions in strings (squares, in particular) for which it is known that in the case of unordered alphabet one cannot even check in $o(n\log n)$ time whether the input string of length $n$ contains some repetitions while in the case of ordered alphabet there are fast $o(n\log n)$ time checking algorithms (see~\cite{Kosolobov4,Kosolobov,Kosolobov2,MainLorentz}). The search of similarities between those problems was actually our primary motivation for the present work although our result shows that the restriction to the case of unordered alphabets does not add considerable computational difficulties to the problem of the calculation of a critical factorization unlike the problem of finding repetitions, so, they are not similar in this aspect.

As a byproduct, we have obtained the first generalization of the constant space string matching algorithm of Crochemore and Perrin~\cite{CrochemorePerrin} to unordered alphabets. However, this generalization requires nonconstant space in the preprocessing step. So, it is still an open question to find a linear time and \emph{constant space} algorithm computing a critical factorization (not necessarily the leftmost one) of a given string on an arbitrary unordered alphabet. Using such tool, one can possibly obtain a constant space string matching algorithm that is simpler and faster than the well-known algorithm of Galil and Seiferas~\cite{GalilSeiferas}.

\noindent\textbf{Acknowledgement.} The author would like to thank Arseny M. Shur for helpful discussions and the invaluable help in the preparation of this paper.

\section*{References}
\bibliography{cf}

\end{document}